\documentclass[10pt,a4paper]{article}
\usepackage{latexsym,amsthm,amssymb,amscd,amsmath}
\newcounter{alphthm}
\setcounter{alphthm}{0}
%
%

%
%

\theoremstyle{plain}
\newtheorem{theorem}{Theorem}[section]
\newtheorem{lemma}[theorem]{Lemma}
\newtheorem{proposition}[theorem]{Proposition}
\newtheorem{cor}[theorem]{Corollary}
\theoremstyle{definition}

\newcommand{\be}{\begin{equation}}
\newcommand{\ee}{\end{equation}}
\newcommand{\ben}{\begin{enumerate}}
\newcommand{\een}{\end{enumerate}}

\textwidth=14cm \textheight=21.6cm

\begin{document}

\title{ Looking for K\"{a}hler- Einstein Structure on Cartan Spaces with Berwald connection}
\author{E. Peyghan, A. Tayebi and A. Ahmadi}

\maketitle
\begin{abstract}
A Cartan manifold is a smooth manifold $M$ whose slit cotangent
bundle $T^{\ast}M_{0}$ is endowed with a regular Hamiltonian $K$
which is positively homogeneous of degree 2 in momenta. The
Hamiltonian $K$ defines a (pseudo)-Riemannian metric $g_{ij}$ in the
vertical bundle over $T^{\ast}M_{0}$ and using it a Sasaki type
metric on $T^{\ast}M_{0}$ is constructed. A natural almost complex
structure is also defined by $K$ on $T^{\ast}M_{0}$ in such a way
that pairing it with the Sasaki type metric an almost K\"{a}hler
structure is obtained. In this paper we deform $g_{ij}$ to a
pseudo-Riemannian metric $G_{ij}$ and we define a corresponding
almost complex K\"{a}hler structure. We determine the Levi-Civita
connection of $G$ and compute all the components of its curvature.
Then we prove that if the structure $(T^{\ast}M_{0},G,J)$ is
K\"{a}hler- Einstein, then the Cartan structure given by $K$ reduce
to a Riemannian one.\footnote{2010 {\it Mathematics Subject
Classification}:  Primary 53B40, 53C60}
\end{abstract}

\textbf{Keywords:} Cartan space, K\"{a}hler structure, symmetric
space, Einstein manifold,\\ Laplace operator, Divergence, Gradient.

\section{Introduction}
\'{E}. Cartan  has originally introduced a Cartan space, which is considered as dual of Finsler space \cite{Car}. Then  H. Rund \cite{Rund}, F. Brickell \cite{Bric} and others studied the
relation between these two spaces. The theory of Hamilton
spaces was introduced and studied by R. Miron (\cite{Miro1}, \cite{Miro2}).
He proved that Cartan space is a particular
case of Hamilton space. Indeed the geometry of regular Hamiltonians
as smooth functions on the cotangent bundle is due to R. Miron and it
is now systematically described in the monograph \cite{MirHri}.

Let us denote the Hamiltonian structure on a manifold $M$ by $(M, H(x,p))$. If the fundamental function $H(x,p)$ is 2-homogeneous on the fibres of the cotangent bundle $(T^*M, M)$, then the
notion of Cartan space is obtained. The modern formulation of the notion of Cartan spaces is due of
the R. Miron \cite{Mir1}, \cite{Mir2}, \cite{Mir3}.
 Based on the studies of E. Cartan, A. Kawaguchi \cite{Kaw}, R.
Miron \cite{MirHri}, \cite{Mir2}, \cite{Mir3}, S. Vacaru \cite{V0,V1,V2},  D. Hrimiuc and
H. Shimada \cite{Hri}, \cite{HriShi},
 P.L. Antonelli and  M. Anastasiei  \cite{Anas}, \cite{Anast} \cite{MirAn}, \cite{M2}, etc.,
 the geometry of Cartan spaces is today an important chapter of differential geometry.

Under  Legendre transformation, the Cartan spaces appear as dual of the Finsler spaces \cite{Miro1}. It is remarkable that  regular Lagrangian which is 2-homogeneous in velocities is nothing but the
square of a fundamental Finsler function and its geometry is
Finsler geometry. This geometry was developed since 1918 by P.
Finsler, E. Cartan, L. Berwald, H. Akbar-Zadeh and many others, see  \cite{AZ}\cite{BCS}\cite{Matsu}\cite{ShDiff}\cite{ShLec}. Using this duality several important results in the Cartan spaces can be obtained: the canonical nonlinear
connection, the canonical metrical connection, the notion of $(\alpha, \beta)$-metrics,  etc \cite{Na}. Therefore, the
theory of Cartan spaces has the same symmetry and beauty like
Finsler geometry. Moreover, it gives a geometrical framework for
the Hamiltonian theory of Mechanics or Physical fields.

Let $(M,K)$ be a Cartan space on a manifold $M$ and put $\tau
:=\frac{1}{2}K^2$. Let us  define the symmetric $M$-tensor field
$G_{ij}:=\frac{1}{\beta}g_{ij}+\frac{v(\tau )}{\alpha \beta
}p_{i}p_{j}$ on slit cotangent bundle $T^{\ast}M_0:=T^{\ast}M-\{0\}$, where $v=v(\tau)$ is  a real valued smooth
function  defined on $[0,\infty)\subset {\mathbb R} $ and $\alpha $
and $\beta$ are real constants. Using this, we can define a Riemannian metric and almost complex structure on
$T^{\ast}M_0$ as follows
\begin{eqnarray*}
G\!\!\!\!&=&\!\!\!\!G_{ij}dx^{i}dx^{j}+G^{ij}\delta p_{i}\delta
p_{j},\\
J(\delta _{i})\!\!\!\!&=&\!\!\!\!G_{ik}\dot{\partial}^k,\ \
J(\dot{\partial}^i)=-G^{ik}\delta _k,
\end{eqnarray*}
where $G^{ij}$ is the inverse of $G_{ij}$.

In this paper, we prove that  $(T^{\ast}M_0,G,J)$ is an almost K\"{a}hlerian manifold. Then we show that the almost complex structure $J$ on $T^{\ast}M_0$ is integrable if and only if $M$ has constant scalar curvature $c$ and  the function $v$ is given  by $v=-c\alpha\beta^2$. We conclude that on a Cartan manifold $M$ of negative
constant flag curvature, $(T^{\ast}M_0,G,J)$ has  a K\"{a}hlerian structure. For Cartan manifolds of positive constant flag curvature, we show that the tube around the zero section has a K\"{a}hlerian structure (see Theorem \ref{THM1}).

Then we find the Levi-Civita connection $\nabla$ of the metric $G$. For the connection $\nabla$, we compute all of components curvature. For a Cartan space $(M,K)$ of  constant curvature $c$, we prove that
in the following cases $(M,K)$ reduce to a Riemannian space: $(i)$
for $c<0$, $(T^{\ast}M_0,G,J)$ bacame a K\"{a}hler Einstein
manifold,\ $(ii)$ for $c>0$, $(T_{\beta}^{\ast}M_0,G,J)$ became a
K\"{a}hler Einstein manifold, where $T_{\beta}^{\ast}M_0$ the tube
around the zero section in $T^{\ast}M$, defined by the condition
$2\tau<\frac{1}{c\beta^2}$. It result that, there is not any
non-Riemannian Cartan structure such that $(T^{\ast}M_0, G, J)$
became a Einstein manifold.

Finally we define divergence, gradient and Laplace  operators on the Cartan manifold $(M, K)$ with Berwald connection. Let $(M,K)$ be a Cartan space with Berwald connection,  $\textbf{S}=p^i\delta_i$ is the geodesic spray of $(M,K)$,  $X=X^i\delta_i+{\bar X}_i{\dot{\partial}}^i$ and $g:=det(g_{ij})$.  We show that $div(X)=0$ if and only if the mean Landsberg curvature of $K$ satisfies $J_i=\delta_i(\ln\sqrt{g})$. We  define the gradient operator by
$G(gradf,X)=Xf,\ \ \ \forall X\in\chi(T^{\ast}M),\ \  f\in C^{\infty}(TM)$ and prove that the gradient operator  is determined by $gradf=G^{ih}(\nabla_{\delta_h}f)\delta_i+G_{ih}(\nabla_{\dot{\partial}^h}f)\dot{\partial}^i$. The Laplace operator of a scalar field $f\in C^{\infty}(TM)$, is defined by $\Delta f=div(gradf)$.  With vanishing the Laplace operator, we prove that  $div(\textbf{S})=0$ if and only if $K$ ia a mean Landsberg metric.

\section{Preliminaries}
Let $M$ be an $n$-dimensional $C^{\infty}$ manifold and
$\pi^{\ast}:T^{\ast}M\longrightarrow M$ its cotangent bundle. If
$(x^i)$ are local coordinates on $M$, then $(x^i,p_i)$ will be taken
as local coordinates on $T^{\ast}M$ with the momenta $(p_i)$
provided by $p=p_idx^i$ where $p\in T^{\ast}_xM$, $x=(x^i)$ and
$(dx^i)$ is the natural basis of $T^{\ast}_xM$. The indices $i, j,
k,\ldots$ will run from 1 to $n$ and the Einstein convention on
summation will be used.

Put $\partial_i:=\frac{\partial}{\partial x^i}$ and $\dot{\partial}^i:=\frac{\partial}{\partial p_i}$. Let $(\partial_i, \dot{\partial}^i)$ be the natural basis in $T_{(x,p)}T^{\ast}M$ and $(dx^i,dp_i)$ be the dual
basis of it. The kernel $V_{(x,p)}$ of the differential
$d\pi^{\ast}:T_{(x,p)}T^{\ast}M\rightarrow T_xM$ is called the
\textit{vertical} subspace of $T_{(x,p)}T^{\ast}M$ and the mapping
$(x,p)\rightarrow V_{(x,p)}$ is a regular distribution on
$T^{\ast}M$ called the \textit{vertical distribution}. This is
integrable with the leaves $T^{\ast}_xM$, $x\in M$ and is locally
spanned by $\dot{\partial}^i$. The vector field
${\textbf{C}}^{\ast}=p_i\dot{\partial}^i$ is called the Liouville
vector field and $\omega=p_idx^i$ is called the Liouville 1-form on
$T^{\ast}M$. Then $d\omega$ is the canonical symplectic structure on
$T^{\ast}M$. For an easer handling of the geometrical objects on
$T^{\ast}M$,  it is usual to consider a supplementary distribution to
the vertical distribution, $(x,p)\rightarrow N_{(x,p)}$, called the
\textit{horizontal distribution} and to report all geometrical
objects on $T^{\ast}M$ to the decomposition
\begin{equation}
T_{(x,p)}T^{\ast}M=N_{(x,p)}\oplus V_{(x,p)}.\label{decom}
\end{equation}
The pieces produced by the decomposition (\ref{decom}) are called
\textit{d}-geometrical objects (\textit{d} is for distinguished)
since their local components behave like geometrical objects on $M$,
although they depend on $x=(x^i)$ and momenta $p=(p_i)$.

The horizontal distribution is taken as being locally spanned by the
local vector fields
\begin{equation}
\delta_i:=\partial_i+N_{ij}(x,p)\dot{\partial}^j.\label{decom1}
\end{equation}
The horizontal distribution is called also a nonlinear connection on $T^{\ast}M$ and the functions $(N_{ij})$ are calledthe local coefficients of this nonlinear connection. It is important
to note that any regular Hamiltonian on $T^{\ast}M$ determines a
nonlinear connection whose local coefficients verify
$N_{ij}=N_{ji}$. The basis $(\delta_i,\dot{\partial}^i)$ is adapted
to the decomposition (\ref{decom}). The dual of it is $(dx^i,\delta
P_i)$, for $\delta p_i=dp_i-N_{ji}dx^j$.

A Cartan structure on $M$ is a function  $K:T^{\ast}M\rightarrow [0,\infty)$  which has the following properties: (i) $K$ is $C^{\infty}$ on $T^{\ast}M_0=T^{\ast}M-\{0\}$; (ii) $K(x,\lambda p)=\lambda K(x,p)$
for all $\lambda>0$ and (iii) the $n\times n$ matrix $(g^{ij})$, where $g^{ij}(x,p)=\frac{1}{2}\dot{\partial}^i\dot{\partial}^jK^2(x,p)$, is
positive definite at all  points of $T^{\ast}M_0$. We notice that in
fact $K(x,p)>0$, whenever $p\neq0$. The pair $(M,K)$ is called a Cartan space.
Using this notations, let us define
\[
p^i=\frac{1}{2}\dot{\partial}^iK^2\ \  \textrm{and}\ \ C^{ijk}=-\frac{1}{4}\dot{\partial}^i\dot{\partial}^j\dot{\partial}^kK^2.
\]
The properties of $K$ imply that
\begin{eqnarray}
&&p^i=g^{ij}p_j,\ \ p_i=g_{ij}p^j,\\
&&g^{ij}p_ip_j=p_ip^j=K^2,\nonumber\\
&&C^{ijk}p_k=C^{ikj}p_k=C^{kij}p_k=0.
\end{eqnarray}
One considers the \textit{formal Christoffel symbols}
\begin{equation}
\gamma^i_{jk}(x,p):=\frac{1}{2}g^{is}(\partial_kg_{js}+\partial_jg_{sk}-\partial_sg_{jk}),
\end{equation}
and the contractions
$\gamma^{\circ}_{jk}(x,p):=\gamma^i_{jk}(x,p)p_i$,
$\gamma^{\circ}_{j\circ}:=\gamma^i_{jk}p_ip^k$. Then the functions
\begin{equation}
N_{ij}(x,p)=\gamma^{\circ}_{ij}(x,p)-\frac{1}{2}\gamma^{\circ}_{h\circ}(x,p)\dot{\partial}^hg_{ij}(x,p),\label{decom2}
\end{equation}
define a nonlinear connection on $T^{\ast}M$. This nonlinear
connection was discovered by R. Miron \cite{Mir1}. Thus a
decomposition (\ref{decom}) holds. From now on, we shall use only
the nonlinear connection given by (\ref{decom2}).

A linear connection $D$ on $T^{\ast}M$ is said to be an
$N$-\textit{linear connection} if $D$ preserves by parallelism the
distribution $N$ and $V$, also we have $D\theta=0$, for
$\theta=\delta p_i\wedge dx^i$. One proves that an $N$-linear
connection can be represented in the adapted basis
$(\delta_i,\dot{\partial}^i)$ in the form
\begin{eqnarray}
D_{\delta_j}\delta_i\!\!\!\!&=&\!\!\!\!B^k_{ij}\delta_j,\ \ \
D_{\delta_j}\dot{\partial}^i=-B^i_{kj}\dot{\partial}^k,\\
D_{\dot{\partial}^j}\delta_i\!\!\!\!&=&\!\!\!\!V^{kj}_i\delta_k,\ \
\ D_{\dot{\partial}^j}\dot{\partial}^i=-V^{ij}_k\dot{\partial}^k,
\end{eqnarray}
where $V^{kj}_i$ is a \textit{d}-tensor field and $B^k_{ij}(x,p)$
behave like the coefficients of a linear connection on $M$. The
functions $B^k_{ij}$ and $V^{kj}_i$ define operators of
\textit{h}-covariant and \textit{v}-covariant derivatives in the
algebra of \textit{d}-tensor fields, denoted by $_{|k}$ and
$\mid^k$, respectively. For $g^{ij}$, these are given by following equation
\begin{eqnarray}
{g^{ij}}_{|k}\!\!\!\!&=&\!\!\!\!\delta_kg^{ij}+g^{sj}B^i_{sk}+g^{is}B^j_{sk},\\
g^{ij}\!\!\mid^k\!\!\!\!&=&\!\!\!\!\dot{\partial}^kg^{ij}+g^{sj}V^{ik}_s+g^{is}V^{jk}_s.
\end{eqnarray}
An \textit{N}-linear connection given in the adapted basis
$(\delta_i,\dot{\partial}^j)$ as $D\Gamma(N)=(B^i_{jk},V^{ik}_j)$ is
called \textit{Berwald connection} if
\begin{equation}
{g^{ij}}_{|k}=-2L^{ij}_k,\ \ \ \ g^{ij}\!\!\mid^k=-2C^{ijk},
\end{equation}
where $L^{ij}_k={C^{ij}_k}_{|h}p^h$ are components of the Landsberg
tensor on $M$ (see  \cite{BT1}\cite{BT2}\cite{TN}\cite{TAE}).

The Berwald connection $B\Gamma(N)=(\dot{\partial}^iN_{jk},0)$ of
the Cartan spaces has the torsions d-tensors as follows
\begin{eqnarray}
&&T^i_{jk}=0,\ \  S^{jk}_i=0,\ \  V^{jk}_i=0,\ \  P^i_{jk}=0,\\
&&R_{ijk}=\delta_kN_{ij}-\delta_jN_{ik}.
\end{eqnarray}
The d-tensors of curvature of $B\Gamma(N)$ are given by
\begin{eqnarray}
R^i_{j kj}\!\!\!\!&=&\!\!\!\!\delta_hB^i_{jk}-\delta_kB^i_{jh}+B^s_{jk}B^i_{sh}-B^s_{jh}B^i_{sk}\\
P^{i   h}_{j   k}\!\!\!\!&=&\!\!\!\!\dot{\partial}^hB^i_{jk}\\
S^{ikh}_j\!\!\!\!&=&\!\!\!\!0
\end{eqnarray}
where   $B^i_{jk} = \dot{\partial}^iN_{jk}$ are the coefficients of the $B\Gamma(N)$-connection.It has also
the following properties
\begin{eqnarray}
K^2_{|j}\!\!\!\!&=&\!\!\!\!\delta_jK^2=0,\ \ K^2\!\!\mid^j=2p^j,\\
p_{i|j}\!\!\!\!&=&\!\!\!\! p^i_{|j}=0,\ \  p_i\!\!\mid^j=\delta^j_i,\ \ p^i\!\!\mid^j=g^{ij},\ \ R_{kij}p^k=0\label{4}\\
\delta_ig_{jk}\!\!\!\!&=&\!\!\!\!B^s_{ji}g_{sk}+B^s_{ki}g_{js}.\label{6}
\end{eqnarray}
\section{ K\"{a}hler Structures on Cotangent Bundle}
Suppose that
\begin{equation}
 \tau :=\frac{1}{2}K^2=\frac{1}{2}g^{ij}(x,p)p_{i}p_{j}.
\end{equation}
We consider a real valued smooth function $v$ defined on
$[0,\infty)\subset {\mathbb R} $ and real constants $\alpha $ and
$\beta$. We define the following symmetric $M$-tensor field of type
(0,2) on $T^{\ast}M_0$ having the components
\begin{equation}
G_{ij}:=\frac{1}{\beta}g_{ij}+\frac{v(\tau )}{\alpha \beta }p_{i}
p_{j}.
\end{equation}
It follows easily that the matrix  $(G_{ij})$ is positive definite
if and only if  $\alpha ,\beta >0,\, \, \,\, \alpha +2\tau v>0.$ The
inverse of this matrix has the entries
\begin{equation}
G^{kl}=\beta g^{kl}-\frac{v\beta }{\alpha +2\tau v}p^{k} p^{l}.
\end{equation}
The components  $G^{kl} $ define symmetric  $M$-tensor field of
type (0,2) on  $T^{\ast}M_0$. It is easy to see that if the matrix
$(G_{ij} )$ is positive definite, then matrix $(G^{kl} )$ is
positive definite too.

Using $(G_{ij})$ and $(G^{ij})$, the following Riemannian metric
 on $T^{\ast}M_0$ is defined
\begin{equation}
G=G_{ij}dx^{i}dx^{j}+G^{ij}\delta p_{i}\delta p_{j}.
\end{equation}

Now, we define an almost complex structure  $J$  on  $T^{\ast}M_0$
by
\begin{equation}
J(\delta _{i})=G_{ik}\dot{\partial}^k,\ \ \
J(\dot{\partial}^i)=-G^{ik}\delta _k.
\end{equation}
It is easy to check that $J^2=-I$.
\begin{theorem}
 $(T^{\ast}M_0,G,J)$ is an almost K\"{a}hlerian manifold.
\end{theorem}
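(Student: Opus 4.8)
The plan is to show that the fundamental 2-form $\Omega$ associated to the pair $(G,J)$, defined by $\Omega(X,Y)=G(JX,Y)$, coincides (up to a constant factor) with the canonical symplectic form $d\omega=dp_i\wedge dx^i$ on $T^{\ast}M_0$, and is therefore closed; together with $J^2=-I$ and the compatibility $G(JX,JY)=G(X,Y)$ this gives the almost Kähler property. First I would compute $\Omega$ on the adapted frame $(\delta_i,\dot\partial^i)$. Using $J(\delta_i)=G_{ik}\dot\partial^k$ and $J(\dot\partial^i)=-G^{ik}\delta_k$ together with the block form $G=G_{ij}dx^idx^j+G^{ij}\delta p_i\delta p_j$, one finds $\Omega(\delta_i,\dot\partial^j)=G(G_{ik}\dot\partial^k,\dot\partial^j)=G_{ik}G^{kj}=\delta_i^j$, while $\Omega(\delta_i,\delta_j)=G(G_{ik}\dot\partial^k,\delta_j)=0$ and $\Omega(\dot\partial^i,\dot\partial^j)=G(-G^{ik}\delta_k,\dot\partial^j)=0$. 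Hence $\Omega=\delta p_i\wedge dx^i$, where I use that the dual coframe to $(\delta_i,\dot\partial^i)$ is $(dx^i,\delta p_i)$.

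Next I would verify that $\Omega=\delta p_i\wedge dx^i$ equals the canonical symplectic form $d\omega=dp_i\wedge dx^i$. Since $\delta p_i=dp_i-N_{ji}dx^j$, we get $\delta p_i\wedge dx^i=dp_i\wedge dx^i-N_{ji}dx^j\wedge dx^i$, and the last term vanishes because $N_{ji}=N_{ij}$ is symmetric (this symmetry was noted in the Preliminaries for the nonlinear connection determined by a regular Hamiltonian) while $dx^j\wedge dx^i$ is antisymmetric. Therefore $\Omega=dp_i\wedge dx^i=d\omega$, which is manifestly closed, so $d\Omega=0$.

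It remains to record that $G$ is $J$-Hermitian, i.e. $G(JX,JY)=G(X,Y)$. On the frame this is the identity $G(J\delta_i,J\delta_j)=G_{ik}G_{jl}G^{kl}=G_{ij}=G(\delta_i,\delta_j)$, and similarly for the other two cases, using $G^{ik}G_{kj}=\delta^i_j$; combined with $J^2=-I$ (already checked in the excerpt) this shows $(G,J)$ is an almost Hermitian structure. An almost Hermitian structure whose fundamental form is closed is by definition almost Kähler, so $(T^{\ast}M_0,G,J)$ is an almost Kählerian manifold. I do not expect any real obstacle here: the only subtlety is the bookkeeping with the adapted (co)frame and the use of the symmetry $N_{ij}=N_{ji}$, which is what makes the "Sasaki-type" two-form exactly the canonical symplectic form rather than merely a perturbation of it.
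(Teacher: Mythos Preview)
Your proof is correct and follows the same route as the paper: check $J$-Hermitianity on the adapted frame, compute the fundamental 2-form, and identify it with the canonical symplectic form. The only differences are cosmetic---you use the convention $\Omega(X,Y)=G(JX,Y)$ rather than the paper's $\theta(X,Y)=G(X,JY)$ (introducing a harmless overall sign), and you add the explicit verification $\delta p_i\wedge dx^i=dp_i\wedge dx^i$ via the symmetry $N_{ij}=N_{ji}$, which the paper asserts without proof.
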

\begin{proof} Since the matrix $(G^{kl} )$ is the inverse of the
matrix $(G_{ij} ),$ then we have
\[
G(J\delta_i,J\delta_j)=G_{ik}G_{jr}G(\dot{\partial}^k,\dot{\partial}^r)=G_{ik}G_{jr}G^{kr}=G_{ij}=G(\delta_i,\delta_j).
\]
The relations
\[
G(J\dot{\partial}^i,J\dot{\partial}^j)=G(\dot{\partial}^i,\dot{\partial}^j),\
\ \ \ G(J\delta _i,J\dot{\partial}^j)=G(\delta_i,\dot{\partial}^j)=0,
\]
may be obtained in a similar way, thus
\[
G(JX,JY)=G(X,Y),\, \, \, \, \, \, \, \forall X,Y\in
\Gamma(T^{\ast}M_0).
\]
It means that $G$ is almost Hermitian with respect to $J$. The
fundamental 2-form associated by this almost K\"{a}hler structure is
$\theta$, defined by
\[
\theta(X,Y):=G(X,JY),\, \, \, \, \forall X,Y\in \Gamma
(T^{\ast}M_0).
\]
Then we get
\[
\theta(\dot{\partial}^i,\delta_j)=G(\dot{\partial}^i,J\delta_j)=G(\dot{\partial}^i,G_{jk}\dot{\partial}^k)=G^{ik}G_{jk}=\delta^i_j,
\]
 and
\[
\theta(\delta_i,\delta_j)=\theta(\dot{\partial}^i,\dot{\partial}^j)=0.
\]
Hence, we have
\begin{equation}
\theta=\delta p_i\wedge dx^i,
\end{equation}
that is the canonical symplectic form of $T^{\ast}M$.
\end{proof}

Here, we study the integrability of the almost complex structure
defined by $J$ on $TM$. To do this, we need the following lemma.
\begin{lemma}{\rm (\cite{MirHri}\cite{PT1}\cite{PT2})}
\emph{Let $(M,F)$ be a Finsler manifold.  Then we have:}\\
$(1)\ \ [\delta_i,\delta_j]=R_{kij}{\dot{\partial}}^k$,\\
$(2)\ \ [\delta_i,{\dot{\partial}}^j]=-({\dot{\partial}}^jN_{ik}){\dot{\partial}}^k$,\\
$(3)\ \ [{\dot{\partial}}^i,{\dot{\partial}}^j]=0$.
\end{lemma}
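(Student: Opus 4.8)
The plan is to compute the three brackets directly in the natural coordinate frame $(\partial_i,\dot{\partial}^j)$ of $T^{\ast}M$, using only the definition $\delta_i=\partial_i+N_{ik}\dot{\partial}^k$, the fact that coordinate vector fields commute (so that $[\partial_i,\partial_j]=[\partial_i,\dot{\partial}^j]=[\dot{\partial}^i,\dot{\partial}^j]=0$), and the fact that $\delta_i$ acts on a function $f$ on $T^{\ast}M$ by $\delta_i f=\partial_i f+N_{il}\dot{\partial}^l f$.

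Statement $(3)$ is immediate, since $\dot{\partial}^i=\partial/\partial p_i$ and $\dot{\partial}^j=\partial/\partial p_j$ are coordinate vector fields on $T^{\ast}M$. For $(2)$, I would split $[\delta_i,\dot{\partial}^j]=[\partial_i,\dot{\partial}^j]+[N_{ik}\dot{\partial}^k,\dot{\partial}^j]$; the first term vanishes, and since $[\dot{\partial}^k,\dot{\partial}^j]=0$ the second equals $-(\dot{\partial}^jN_{ik})\dot{\partial}^k$, which is the claimed identity.

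For $(1)$, I would expand $[\delta_i,\delta_j]$ bilinearly into the four brackets $[\partial_i,\partial_j]$, $[\partial_i,N_{jl}\dot{\partial}^l]$, $[N_{ik}\dot{\partial}^k,\partial_j]$ and $[N_{ik}\dot{\partial}^k,N_{jl}\dot{\partial}^l]$. The first vanishes; the middle two together contribute $(\partial_iN_{jk}-\partial_jN_{ik})\dot{\partial}^k$; and the last, again using $[\dot{\partial}^k,\dot{\partial}^l]=0$, contributes $N_{il}(\dot{\partial}^lN_{jk})\dot{\partial}^k-N_{jl}(\dot{\partial}^lN_{ik})\dot{\partial}^k$. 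Adding these and recognizing $\partial_iN_{jk}+N_{il}\dot{\partial}^lN_{jk}=\delta_iN_{jk}$ yields $[\delta_i,\delta_j]=(\delta_iN_{jk}-\delta_jN_{ik})\dot{\partial}^k$, which coincides with $R_{kij}\dot{\partial}^k$ by the definition of the curvature $R_{kij}$ of the nonlinear connection (using $N_{ij}=N_{ji}$).

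There is no genuine obstacle here: the proof is a mechanical bracket computation. The only point needing a little care is the index bookkeeping in $(1)$, where one must pair the $\partial_iN_{jk}$ term arising from the mixed brackets with the $N_{il}\dot{\partial}^lN_{jk}$ term arising from the double-vertical bracket so that together they reassemble the horizontal derivative $\delta_iN_{jk}$, and one must keep track of the sign convention relating $R_{kij}$ to $\delta_iN_{jk}-\delta_jN_{ik}$ (recalling that on $T^{\ast}M$ the nonlinear-connection coefficients $N_{ij}$ carry two lower indices).
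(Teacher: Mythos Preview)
Your computation is correct and is exactly the standard argument. Note, however, that the paper does not actually prove this lemma: it is stated with citations to \cite{MirHri}, \cite{PT1}, \cite{PT2} and no proof is given in the text. So there is nothing to compare against, and your direct bracket computation in the coordinate frame is precisely what those references do. The only caveat, which you already flag, is the sign/index convention: with the paper's definition $R_{ijk}=\delta_kN_{ij}-\delta_jN_{ik}$ and the symmetry $N_{ij}=N_{ji}$, one has $R_{kij}=\delta_jN_{ik}-\delta_iN_{jk}$, so matching this to the bracket requires care (indeed the paper's own sign conventions are not entirely consistent here).
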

\begin{lemma}\label{Im}
Let $(M,K)$ be a Cartan space. Then $J$ is complex structure on
$T^{\ast}M_0$ if and only if $A_{kij}=0$ and
\begin{equation}
R_{kij}=\frac{v}{\alpha\beta^2}(g_{ik}p_j-g_{jk}p_i),\label{2}
\end{equation}
where $A_{kij}=\delta_i G_{jk}-\delta_j
G_{ik}+G_{ir}{\dot{\partial}}^rN_{jk}-G_{jr}{\dot{\partial}}^rN_{ik}$.
\end{lemma}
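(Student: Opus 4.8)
The plan is to compute the Nijenhuis tensor $N_J$ of the almost complex structure $J$ on the adapted frame $(\delta_i,\dot\partial^i)$ and to read off exactly when it vanishes. Recall $N_J(X,Y)=[JX,JY]-[X,Y]-J[JX,Y]-J[X,JY]$, and $J$ is integrable precisely when $N_J\equiv 0$. Since $J$ interchanges (up to the factors $G_{ik}$, $G^{ik}$) the horizontal and vertical distributions, there are essentially three families of brackets to evaluate: $N_J(\delta_i,\delta_j)$, $N_J(\delta_i,\dot\partial^j)$ and $N_J(\dot\partial^i,\dot\partial^j)$. For each of these I would expand $J\delta_i=G_{ik}\dot\partial^k$ and $J\dot\partial^i=-G^{ik}\delta_k$, push the Leibniz rule through the brackets, and substitute the three bracket formulas from the first Lemma: $[\delta_i,\delta_j]=R_{kij}\dot\partial^k$, $[\delta_i,\dot\partial^j]=-(\dot\partial^jN_{ik})\dot\partial^k$, and $[\dot\partial^i,\dot\partial^j]=0$. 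Throughout I will use that $G_{ij}$ is an $M$-tensor (so $\dot\partial^k$ hits it, while $\delta_k$ acts through the nonlinear connection) and that $G^{ij}$ is its inverse, together with the relations $p_{i|j}=0$, $p_i\mid^j=\delta^j_i$, $K^2\mid^j=2p^j$, $R_{kij}p^k=0$ from the Preliminaries.

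The key computation is the horizontal-horizontal component $N_J(\delta_i,\delta_j)$. Expanding, $[J\delta_i,J\delta_j]=[G_{ik}\dot\partial^k,G_{jr}\dot\partial^r]$, which produces terms $G_{ik}(\dot\partial^kG_{jr})\dot\partial^r$ minus the symmetric term; the term $[\delta_i,\delta_j]=R_{kij}\dot\partial^k$ contributes $-R_{kij}\dot\partial^k$; and $-J[J\delta_i,\delta_j]-J[\delta_i,J\delta_j]$ produces horizontal pieces after applying $J$ to vertical vectors, giving contributions built from $\delta_i G_{jk}$, $G_{jr}\dot\partial^rN_{ik}$ and their $i\leftrightarrow j$ antisymmetrizations — exactly the combination $A_{kij}=\delta_iG_{jk}-\delta_jG_{ik}+G_{ir}\dot\partial^rN_{jk}-G_{jr}\dot\partial^rN_{ik}$ appearing in the statement, after raising/lowering with $G$. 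So the vanishing of the horizontal part of $N_J(\delta_i,\delta_j)$ forces both a purely vertical obstruction and, once one transvects suitably, the condition $A_{kij}=0$ together with an identity relating $R_{kij}$ to $G$-derivatives of the metric tensor. The remaining mixed and vertical-vertical components of $N_J$ should, after the same substitutions, reduce to consequences of $A_{kij}=0$ (using the symmetry $N_{ij}=N_{ji}$ and $B^i_{jk}=\dot\partial^iN_{jk}$), so they impose nothing new.

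The main obstacle — and the real content of the lemma — is converting the raw obstruction "$R_{kij}=$ (something involving $\delta$-derivatives of $G_{ij}$ and $N_{ij}$)" into the clean tensorial form $R_{kij}=\frac{v}{\alpha\beta^2}(g_{ik}p_j-g_{jk}p_i)$. For this I would differentiate $G_{ij}=\frac1\beta g_{ij}+\frac{v(\tau)}{\alpha\beta}p_ip_j$ along $\delta_k$, using $\delta_k g_{ij}=B^s_{ik}g_{sj}+B^s_{jk}g_{is}$ (equation (\ref{6})), $\delta_k\tau=0$ (since $K^2_{|j}=0$), and $p_{i|j}=0$; this should make the $B^s_{jk}$-terms cancel against the $G_{ir}\dot\partial^rN_{jk}=G_{ir}B^r_{jk}$ terms in $A_{kij}$, leaving the $A_{kij}=0$ condition essentially content-free for Berwald connection, while the leftover horizontal curvature term collapses to $R_{kij}$ on one side and, on the other, to $\frac{v}{\alpha\beta}$ times the antisymmetrization $p_ip_j$-part, i.e. $\frac{v}{\alpha\beta}(\frac1\beta g_{ik}p_j - \cdots)$; tracking the constants carefully gives the factor $\frac{v}{\alpha\beta^2}$. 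The delicate points are the bookkeeping of which derivative operator ($\delta$ versus $\dot\partial$) acts where, and making sure the $v'(\tau)$-terms all drop out because $\tau$ is horizontally constant — that is what makes the final expression depend on $v$ only through its value, not its derivative.
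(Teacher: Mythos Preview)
Your overall strategy matches the paper's: compute $N_J$ on the adapted frame, focus on $N_J(\delta_i,\delta_j)$, and observe that the mixed and vertical--vertical cases follow once this one vanishes. You also correctly identify that the horizontal component of $N_J(\delta_i,\delta_j)$ is $A_{hij}G^{hk}\delta_k$, so its vanishing is precisely $A_{kij}=0$.

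The gap is in your third paragraph, where you misidentify the source of the $R_{kij}$ formula. The vertical component of $N_J(\delta_i,\delta_j)$ is $(M_{kij}-R_{kij})\dot\partial^k$ with
\[
M_{kij}:=G_{ir}\,\dot\partial^rG_{jk}-G_{jr}\,\dot\partial^rG_{ik},
\]
which is exactly the term you produced in your second paragraph from $[J\delta_i,J\delta_j]$. The equation $R_{kij}=\frac{v}{\alpha\beta^2}(g_{ik}p_j-g_{jk}p_i)$ is obtained by evaluating $M_{kij}$, and this requires the \emph{vertical} derivative $\dot\partial^rG_{jk}$, not the horizontal derivative $\delta_kG_{ij}$ that you propose. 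In the paper one finds
\[
G_{ir}\,\dot\partial^rG_{jk}
=\tfrac{2}{\beta^2}C_{ijk}
+\tfrac{v}{\alpha\beta^2}\bigl(g_{ij}p_k+g_{ik}p_j\bigr)
+\bigl(\tfrac{v'}{\alpha\beta^2}+\cdots\bigr)p_ip_jp_k,
\]
and antisymmetrizing in $i,j$ kills the totally symmetric pieces $C_{ijk}$, $g_{ij}p_k$, and $p_ip_jp_k$, leaving $M_{kij}=\frac{v}{\alpha\beta^2}(g_{ik}p_j-g_{jk}p_i)$. In particular the $v'$ terms \emph{do} appear (since $\dot\partial^r\tau=p^r\neq 0$) and disappear by this symmetry cancellation, not because $\tau$ is horizontally constant as you claim. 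Your $\delta_k$-differentiation argument, with $\delta_k\tau=0$ and the cancellation against $G_{ir}B^r_{jk}$, is the correct tool for showing that $A_{kij}=0$ holds automatically, but that is the content of the subsequent theorem, not of this lemma; here $A_{kij}=0$ is simply one of the two stated conditions.
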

\begin{proof}
Using the definition of the Nijenhuis tensor field $N_J$ of $J$, that is,
\[
N_J(X,Y)=[JX,JY]-J[JX,Y]-J[X,JY]-[X,Y],\ \ \ \forall X, Y\in
\Gamma(T^{\ast}M)
\]
we get
\begin{equation}
N_J(\delta_i,\delta_j)=A_{hij}G^{hk}\delta_k+
(M_{kij}-R_{kij}){\dot{\partial}}^k,\label{Nij1}
\end{equation}
where
$M_{kij}=G_{ir}{\dot{\partial}}^rG_{jk}-G_{jr}{\dot{\partial}}^rG_{ik}$. Let $C_{jk}^r:=g_{jl}g_{sk}C^{rls}$, then we have
\[
{\dot{\partial}}^rg_{jk}=-g_{jl}g_{sk}{\dot{\partial}}^rg^{ls}=2g_{jl}g_{sk}C^{rls}=2C_{jk}^r.
\]
By above equation, we obtain
\begin{equation}
G_{ir}{\dot{\partial}}^rG_{jk}=\frac{2}{\beta^2}C_{ijk}
+\frac{v}{\alpha\beta^2}(g_{ji}p_k+g_{ik}p_j)
+(\frac{v^{\prime}}{\alpha\beta^2}+\frac{2vv^{\prime}\tau}{\alpha\beta}+\frac{2v^2}{\alpha^2\beta^2})p_ip_jp_k.\label{Nij2}
\end{equation}
where $C_{ijk}=g_{ir}C_{jk}^r$. From (\ref{Nij2}) we get
\begin{equation}
M_{kij}=\frac{v}{\alpha\beta^2}(g_{ik}p_j-g_{jk}p_i).\label{Nij3}
\end{equation}
By a straightforward computation, it follows that
$N_J({\dot{\partial}}^i,{\dot{\partial}}^j)=0,
N_J({\dot{\partial}}^i,\delta_j)=0$, whenever
$N_j(\delta_i,\delta_j)=0$. Therefore, from relations (\ref{Nij1})
and (\ref{Nij3}), we conclude that the necessary and sufficient
conditions for the Nijenhuis tensor field $N_J$ to vanish, so that J
is a complex structure, are that $A_{kij}=0$ and (\ref{2}) hold.
\end{proof}

In equation (\ref{2}), we put $-\frac{v}{\alpha\beta^2}=c$, where $c$
is constant. Then we get
\begin{equation}
R_{kij}=c(g_{jk}p_i-g_{ik}p_j).\label{3}
\end{equation}
\begin{theorem}\label{Im1}
Let $(M,K)$ be a Cartan space of dimension $n\geq 3$. Then the
almost complex structure $J$ on $T^{\ast}M_0$ is integrable if and
only if (\ref{3}) is hold and the function $v$ is given by
\begin{equation}
v=-c\alpha\beta^2.\label{constant}
\end{equation}
\end{theorem}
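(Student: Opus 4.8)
The plan is to deduce the statement from Lemma \ref{Im}, which reduces integrability of $J$ to the two conditions $A_{kij}=0$ and (\ref{2}); so I must show that these are jointly equivalent to ``(\ref{3}) holds and $v=-c\alpha\beta^2$''. First I would record that $A_{kij}$ vanishes identically for the Berwald connection. Writing $\delta_iG_{jk}$ in terms of the $h$-covariant derivative and using that $B^r_{ij}=\dot\partial^rN_{ij}$ is symmetric in $i,j$, one gets $A_{kij}=G_{jk|i}-G_{ik|j}$; since $p_{i|j}=0$ and $\delta_i\tau=\frac12\delta_iK^2=0$ (hence $\delta_iv(\tau)=v'(\tau)\delta_i\tau=0$), the summand $\frac{v}{\alpha\beta}p_jp_k$ of $G_{jk}$ is $h$-parallel, and what remains is the antisymmetrisation in $i,j$ of $\frac1\beta g_{jk|i}$, which is zero because $g_{jk|i}$ is totally symmetric (a multiple of the Landsberg tensor; in the convention (\ref{6}) it is even $0$). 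Thus the only substantive requirement for integrability of $J$ is (\ref{2}).

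For the easy direction, assume (\ref{3}) and $v=-c\alpha\beta^2$. Then $\frac{v}{\alpha\beta^2}=-c$, so the right-hand side of (\ref{2}) becomes $-c(g_{ik}p_j-g_{jk}p_i)=c(g_{jk}p_i-g_{ik}p_j)=R_{kij}$ by (\ref{3}); together with $A_{kij}=0$, Lemma \ref{Im} gives that $J$ is a complex structure, i.e.\ integrable.

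For the converse, suppose $J$ is integrable, so that (\ref{2}) holds; the heart of the matter is to show that $v$ must be \emph{constant}. I would argue by homogeneity. Since the nonlinear connection coefficients $N_{ij}$ are positively homogeneous of degree $1$ in $p$, so is $R_{kij}=\delta_kN_{ij}-\delta_jN_{ik}$; and since $g_{ij}$ is homogeneous of degree $0$, the tensor $g_{ik}p_j-g_{jk}p_i$ is homogeneous of degree $1$ as well. Contracting (\ref{2}) with $g^{ik}$ — here one uses $n\geq 2$, hence a fortiori $n\geq 3$, so that $n-1\neq 0$ — gives $g^{ik}R_{kij}=(n-1)\frac{v}{\alpha\beta^2}\,p_j$; as the left-hand side and $p_j$ are both homogeneous of degree $1$ in $p$, the factor $\frac{v(\tau)}{\alpha\beta^2}$ must be homogeneous of degree $0$. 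Because $\tau=\frac12K^2$ is homogeneous of degree $2$ and sweeps out all of $(0,\infty)$ along each ray, this forces $v$ to be constant on $(0,\infty)$, and then on $[0,\infty)$ by continuity. Putting $c:=-v/(\alpha\beta^2)$ turns (\ref{2}) into (\ref{3}) and yields $v=-c\alpha\beta^2$, i.e.\ (\ref{constant}).

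The one genuine obstacle is this last step: forcing the a priori function $v(\tau)$ to reduce to a constant. The homogeneity/contraction argument above seems the cleanest route; the alternative — presumably the reason the hypothesis $n\geq 3$ is imposed — is to invoke a Schur-type rigidity for the curvature $R_{kij}$ via the second Bianchi identity. The verification that $A_{kij}\equiv 0$ is routine but does rely on the full list of torsion and curvature identities (\ref{4})--(\ref{6}) of the Berwald connection together with $\delta_iK^2=0$.
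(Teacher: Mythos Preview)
Your proof is correct and in fact more complete than the paper's. Both arguments dispose of $A_{kij}$ the same way---reducing it via the Berwald identities to $\frac{1}{\beta}(g_{jk|i}-g_{ik|j})$, which vanishes by the total symmetry of the Landsberg tensor---and both observe that once $v$ is constant with $c:=-v/(\alpha\beta^2)$, Lemma~\ref{Im} makes integrability equivalent to (\ref{3}). The difference is in the converse: the paper simply \emph{sets} $-v/(\alpha\beta^2)=c$ with $c$ constant in the line preceding the theorem, and its proof reads ``we suppose that $v=-c\alpha\beta^2$'' without ever arguing why integrability forces $v(\tau)$ to be independent of $\tau$. Your homogeneity/contraction argument actually supplies this missing step, and needs only $n\geq 2$; your guess that the hypothesis $n\geq 3$ signals an intended Schur-type Bianchi argument is plausible---the paper's surrounding discussion links (\ref{3}) to constant scalar curvature---but no such argument appears in the paper's proof.
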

\begin{proof}
From equation $p_{i|k}=0$ of relation (\ref{4}), we conclude that
$\delta_ip_k=N_{ik}$. Hence we obtain
\begin{eqnarray}
A_{kij}\!\!\!\!&=&\!\!\!\!\delta_i g_{jk}-\delta_j
g_{ik}+g_{ir}{\dot{\partial}}^rN_{jk}-g_{jr}{\dot{\partial}}^rN_{ik}\nonumber\\
\!\!\!\!&=&\!\!\!\!\delta_ig_{jk}-\delta_jg_{ik}+g_{ir}B^r_{jk}-g_{jr}B^r_{ik}\nonumber\\
\!\!\!\!&=&\!\!\!\!g_{jk|i}-g_{ik|j}\nonumber\\
\!\!\!\!&=&\!\!\!\!2L_{jki}-2L_{ikj}=0
\end{eqnarray}
Now we suppose that  $v=-c\alpha\beta^2$. Thus from equation $A_{kij}=0$ and
Lemma \ref{Im}, we conclude that $J$ is integrable if and only if
(\ref{3}) is hold.
\end{proof}
A Cartan space $K^n$ is of constant scalar curvature $c$ if
\begin{equation}
H_{hijk}p^ip^jX^hX^k=c(g_{hj}g_{ik}-g_{hk}g_{ij})p^ip^jX^hX^k,\label{7}
\end{equation}
for every $(x,p)\in T^{\ast}_0M$ and $X=(X^i)\in T_xM$. Here
$H_{hijk}$ is the (hh)h-curvature of the linear Cartan connection of
$K^n$. We replace $H_{hijk}$ in (\ref{7}) with $g_{is}H^s_{hjk}$ and so it
reduce to following
\begin{equation}
p_sH^s_{hjk}p^jX^hX^k=c(p_hp_k-K^2g_{hk})X^hX^k.\label{8}
\end{equation}
By part (ii) of Proposition 5.1 in chapter 7 of \cite{MirHri},
$p_sH^s_{hjk}=-R_{hjk}$, hence we get
\[
R_{hjk}p^jX^hX^k=c(K^2g_{hk}-p_hp_k)X^hX^k,
\]
or equivalently
\begin{equation}
R_{hjk}p^j=c(K^2g_{hk}-p_hp_k),\label{8}
\end{equation}
because $(X^h)$ and $X^k$ are arbitrary vector fields on $M$. It is
easy to check that (\ref{8}) follows from (\ref{3}). Similarly can
be shown that if Cartan space $K^n$ has the constant scalar
curvature $c$, then the equation (\ref{3}) is hold (see
\cite{Matsu}).
\begin{theorem}\label{THM1}
Let $(M,K)$ be a Cartan space with constant flag curvature $c$. Suppose that $v$ is given by (\ref{constant}). Then\\
$(i)$ for negative constant $c$, structure $(T^{\ast}M_0,G,J)$ is a K\"{a}hler manifold;\\
$(ii)$ for positive constant $c$, the tube around the zero section
in $T^{\ast}M$, defined by the condition $2\tau=K^2<\frac{1}{c\beta^2}$,  is a K\"{a}hler manifold.
\end{theorem}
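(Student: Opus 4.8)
The plan is to combine Theorem \ref{Im1} with the definition of the almost Kähler structure established in the first theorem. Recall that $(T^{\ast}M_0,G,J)$ is automatically almost Kähler (the fundamental $2$-form $\theta$ is the canonical symplectic form $\delta p_i\wedge dx^i$, hence closed), so it is Kähler precisely when $J$ is integrable. By Theorem \ref{Im1}, with $v$ chosen as in (\ref{constant}), integrability of $J$ is equivalent to equation (\ref{3}), i.e. $R_{kij}=c(g_{jk}p_i-g_{ik}p_j)$. Thus the whole statement reduces to verifying that a Cartan space of constant flag curvature $c$ satisfies (\ref{3}) — together with the positivity condition $\alpha+2\tau v>0$ needed for $(G_{ij})$ to remain positive definite, which is where the sign of $c$ and the tube condition enter.

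For part $(i)$: First I would recall, from the discussion following Theorem \ref{Im1}, that constant flag curvature $c$ is equivalent to (\ref{8}), namely $R_{hjk}p^j=c(K^2g_{hk}-p_hp_k)$, and that in fact (\ref{3}) holds in this case (the excerpt asserts this, citing \cite{Matsu}). So the integrability hypothesis of Theorem \ref{Im1} is met. It remains only to check that $G$ is a genuine (positive-definite) Riemannian metric: with $v=-c\alpha\beta^2$ we have $\alpha+2\tau v=\alpha(1-2\tau c\beta^2)$, and since $\alpha,\beta>0$ and $c<0$ while $\tau=\tfrac12K^2\ge 0$, the factor $1-2\tau c\beta^2\ge 1>0$ everywhere on $T^{\ast}M_0$. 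Hence $(G_{ij})$ is positive definite on all of $T^{\ast}M_0$, $J$ is integrable, and the almost Kähler structure is Kähler.

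For part $(ii)$: the argument is identical regarding integrability — equation (\ref{3}) still follows from constant flag curvature $c>0$, so $J$ is integrable wherever $G$ is defined as a metric. The only difference is that now $1-2\tau c\beta^2>0$ holds not on all of $T^{\ast}M_0$ but exactly on the tube $2\tau=K^2<\tfrac{1}{c\beta^2}$; there $(G_{ij})$ (and hence $(G^{kl})$) is positive definite, so $(T^{\ast}M,G,J)$ restricted to this tube is almost Kähler with integrable $J$, i.e. Kähler. I expect no serious obstacle here: the only point requiring care is the passage from the "constant scalar curvature" computation in (\ref{7})--(\ref{8}) to the full tensorial identity (\ref{3}) for spaces of constant flag curvature, which the excerpt has already delegated to \cite{Matsu}; everything else is bookkeeping on the positivity domain of $v=-c\alpha\beta^2$.
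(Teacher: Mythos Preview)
Your proposal is correct and follows essentially the same approach as the paper: invoke Theorem \ref{Im1} (together with the preceding discussion that constant flag curvature yields (\ref{3})), and then verify the positivity condition $\alpha+2\tau v=\alpha(1-2c\beta^2\tau)>0$, which holds globally when $c<0$ and only on the tube $K^2<\tfrac{1}{c\beta^2}$ when $c>0$. The paper's proof is just a terse two-line version of what you wrote.
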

\begin{proof}
The function  $v$  must satisfies in the following condition
\begin{equation}
\alpha +2\tau v=\alpha (1+2(-c)\beta ^{2} \tau )>0,\, \, \, \, \,
\alpha ,\beta >0.
\end{equation}
By using the above relation and Theorem \ref{Im1}, we complete the proof.
\end{proof}
By attention to the Theorem \ref{THM1}, the components of the K\"{a}hler
metric $G$ on $T^{\ast}M_0$ are
\begin{equation}
\left\{\begin{array}{l} {G_{ij} =\frac{1}{\beta } g_{ij} -c\beta
p_{i}p_{j} ,} \\ {G^{ij} =\beta g_{ij}+\frac{c\beta ^{3} }{1-2c\beta
^{2} \tau } p_{i}p_{j} .}
\end{array}\right.\label{Kahler}
\end{equation}
\section{A K\"{a}hler Einstein Structure on Cotangent Bundle}
In this section, we study the property of $(T^{\ast}M_0,G)$ to be
Einstein. We find the expression of the Levi-Civita connection $\nabla$ of the metric $G$ on $T^{\ast}M_0$, then we get  the curvature tensor field of $\nabla$. Then, by computing the corresponding traces, we find the components of Ricci tensor field of   $\nabla$.
\subsection{The Levi-Civita Connection}
\begin{lemma}
The Levi-Civita connection of the K\"{a}hler metric $G$ are given by following
\begin{eqnarray}
\nabla_{{\dot{\partial}}^i}{\dot{\partial}}^j\!\!\!\!&=&\!\!\!\!(\beta^2L^{ijs})\delta_s+(-C^{ij}_s+c\beta
G^{ij}p_s){\dot{\partial}}^s, \label{con3}\\
\nabla_{\delta_i}{\dot{\partial}}^j\!\!\!\!&=&\!\!\!\!(C^{js}_i-c\beta
G^{js}p_i)\delta_s-(L^j_{is}+B^j_{is}){\dot{\partial}}^s,\label{con2}\\
\nabla_{{\dot{\partial}}^i}{\delta_j}\!\!\!\!&=&\!\!\!\!(C^{is}_j-c\beta
G^{is}p_j)\delta_s-L^i_{js}{\dot{\partial}}^s,\label{con4}\\
\nabla_{\delta_i}\delta_j\!\!\!\!&=&\!\!\!\!(L^s_{ij}+B^s_{ij})\delta_s+(-\frac{1}{\beta^2}C_{ijs}+c\beta
G_{js}p_i){\dot{\partial}}^s.\label{con9}
\end{eqnarray}
\end{lemma}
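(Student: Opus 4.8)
The plan is to read off all four connection coefficients from the Koszul formula
\begin{multline*}
2G(\nabla_XY,Z)=XG(Y,Z)+YG(Z,X)-ZG(X,Y)\\
+G([X,Y],Z)-G([Y,Z],X)+G([Z,X],Y),
\end{multline*}
evaluated on the adapted frame $\{\delta_i,\dot{\partial}^i\}$. Since the horizontal and vertical distributions are $G$-orthogonal with $G(\delta_i,\delta_j)=G_{ij}$, $G(\dot{\partial}^i,\dot{\partial}^j)=G^{ij}$ and $G(\delta_i,\dot{\partial}^j)=0$, for each ordered pair $(X,Y)$ it suffices to compute $2G(\nabla_XY,\delta_k)$ and $2G(\nabla_XY,\dot{\partial}^k)$; contracting the first with $G^{kl}$ and the second with $G_{kl}$ then yields the horizontal and vertical parts of $\nabla_XY$. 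The brackets needed are those computed above, $[\delta_i,\delta_j]=R_{kij}\dot{\partial}^k$, $[\delta_i,\dot{\partial}^j]=-B^j_{ik}\dot{\partial}^k$ (with $B^j_{ik}=\dot{\partial}^jN_{ik}$), and $[\dot{\partial}^i,\dot{\partial}^j]=0$.

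The computation rests on a short dictionary of derivatives: $\delta_i\tau=0$, $\dot{\partial}^i\tau=p^i$, $\dot{\partial}^ip_j=\delta^i_j$, $\dot{\partial}^ip^j=g^{ij}$, and, from $p_{i|j}=p^i{}_{|j}=0$ in \eqref{4}, $\delta_ip_j=N_{ij}=B^s_{ij}p_s$ and $\delta_ip^j=-B^j_{si}p^s$; on the metric side $\dot{\partial}^kg^{ij}=-2C^{ijk}$, $\dot{\partial}^kg_{ij}=2C^k_{ij}$, together with the Berwald identity ${g^{ij}}_{|k}=-2L^{ij}_k$ (and its companion for $g_{ij}$), which expresses $\delta_kg^{ij}$ and $\delta_kg_{ij}$ through $B^i_{jk}$ and the Landsberg tensor. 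Differentiating the explicit components \eqref{Kahler} with this dictionary produces the $\delta_*$ and $\dot{\partial}^*$ derivatives of $G_{ij}$ and $G^{ij}$. Two families of algebraic identities do the simplifying work: first $C^{ijk}p_k=0$, $p_ip^i=2\tau$ and the collapse $1+\dfrac{2c\beta^2\tau}{1-2c\beta^2\tau}=\dfrac{1}{1-2c\beta^2\tau}$, whence $p_sG^{sk}=\dfrac{\beta}{1-2c\beta^2\tau}p^k$ and $G_{js}p^s=\dfrac{1-2c\beta^2\tau}{\beta}p_j$; second the Landsberg relations $L^{ij}_kp^k=0=L^{ij}_kp_j$, obtained by $h$-differentiating the trace-free relations $C^{ij}_kp^k=0=C^{ij}_kp_j$ and using $p^k{}_{|h}=p_{k|h}=0$. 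Finally the curvature term of $[\delta_i,\delta_j]$ is rewritten by \eqref{3}, $R_{kij}=c(g_{jk}p_i-g_{ik}p_j)$, and this is exactly what feeds the $c\beta G$-terms of \eqref{con3}--\eqref{con9}.

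To illustrate, take $X=\dot{\partial}^i$, $Y=\dot{\partial}^j$. All brackets vanish, so $2G(\nabla_{\dot{\partial}^i}\dot{\partial}^j,\dot{\partial}^k)=\dot{\partial}^iG^{jk}+\dot{\partial}^jG^{ik}-\dot{\partial}^kG^{ij}$; differentiating \eqref{Kahler} and then contracting $(-C^{ij}_s+c\beta G^{ij}p_s)$ against $G^{sk}$ (using $C^{ij}_sp^s=0$ and $p_sG^{sk}=\tfrac{\beta}{1-2c\beta^2\tau}p^k$) one checks the two sides agree, which fixes the vertical part of \eqref{con3}. For the horizontal part, $2G(\nabla_{\dot{\partial}^i}\dot{\partial}^j,\delta_k)=-\delta_kG^{ij}-B^j_{ks}G^{si}-B^i_{ks}G^{sj}$, and substituting $\delta_kG^{ij}=-2\beta L^{ij}_k-B^i_{sk}G^{sj}-B^j_{sk}G^{si}$ the $B$-terms cancel, leaving $2\beta L^{ij}_k=2\beta^2L^{ijs}G_{sk}$ because $L^{ij}_sp^s=0$. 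The other three coefficients come out the same way; the one new ingredient in $\nabla_{\delta_i}\dot{\partial}^j$, $\nabla_{\dot{\partial}^i}\delta_j$, $\nabla_{\delta_i}\delta_j$ is the $R_{kij}$ produced by $[\delta_i,\delta_j]$, which \eqref{3} converts into the stated $c\beta G$-coefficients. (Alternatively, since $(T^{\ast}M_0,G,J)$ is K\"ahler by Theorem \ref{THM1}, so $\nabla J=0$, one may deduce the two derivatives with vertical second argument from the two with horizontal second argument via $\nabla_X(G_{jk}\dot{\partial}^k)=J\nabla_X\delta_j$, roughly halving the labour.)

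The only place needing genuine care is the bookkeeping: tracking the total symmetry of $C^{ijk}$ and of the Landsberg tensor through many contractions, and — the delicate step — verifying that when the summands $\beta g^{kl}$ and $\tfrac{c\beta^3}{1-2c\beta^2\tau}p^kp^l$ of $G^{kl}$ (and dually for $G_{kl}$) are substituted into the Koszul expressions, the resulting rational functions of $\tau$ (the $\tfrac{c\beta^3}{1-2c\beta^2\tau}$ and $\tfrac{c^2\beta^5}{(1-2c\beta^2\tau)^2}$ pieces) recombine precisely into the single compact coefficients of \eqref{con3}--\eqref{con9}; the identities $C^{ijk}p_k=0$, $L^{ij}_kp^k=L^{ij}_kp_j=0$ and $1+\tfrac{2c\beta^2\tau}{1-2c\beta^2\tau}=\tfrac{1}{1-2c\beta^2\tau}$ are exactly what force these cancellations. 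As a consistency check one confirms that $\nabla$ is torsion-free and metric: for instance the $i\!\leftrightarrow\!j$ antisymmetric part of \eqref{con9}, and the difference of \eqref{con2} and \eqref{con4}, reproduce $[\delta_i,\delta_j]=R_{kij}\dot{\partial}^k$ and $[\delta_i,\dot{\partial}^j]=-B^j_{ik}\dot{\partial}^k$ by means of \eqref{3}.
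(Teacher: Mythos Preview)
Your proposal is correct and follows essentially the same approach as the paper: both apply the Koszul formula in the adapted frame $\{\delta_i,\dot{\partial}^i\}$, work out the case $\nabla_{\dot{\partial}^i}\dot{\partial}^j$ in detail (computing the horizontal and vertical coefficients separately and simplifying via $\delta_kG^{ij}$, $\dot{\partial}^kG^{ij}$, the Landsberg identity ${g^{ij}}_{|k}=-2L^{ij}_k$, and the trace-free relations $C^{ijk}p_k=0$, $L^{ij}_kp^k=0$), and then declare the remaining three formulas analogous. Your write-up is in fact more explicit than the paper's about the dictionary of derivatives and the cancellation mechanisms, and your alternative shortcut via $\nabla J=0$ and the torsion-free consistency check are pleasant additions not present in the original.
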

\begin{proof}
Recall that for Cartan space with Berwald connection, the relation
$B^j_{ik}={\dot{\partial}}^jN_{ik}$ is hold, and so we have
$[\delta_i,{\dot{\partial}}^j]=B^j_{ik}{\dot{\partial}}^k$. Also the
Levi-Civita connection $\nabla$ of the Riemannian manifold
$(T^{\ast}M_0,G)$ is obtained from the formula
\begin{eqnarray}
2G({\nabla }_{X} Y,Z)\!\!\!\!&=&\!\!\!\!X(G(Y,Z))+Y(G(X,Z))-Z(G(X,Y))\nonumber\\
\!\!\!\!&+&\!\!\!\!G([X,Y],Z)-G([X,Z],Y)-G([Y,Z],X),\, \, \, \, \,
\forall X,Y,Z\in \Gamma(T^{\ast}M_0),\label{Levi}
\end{eqnarray}
and is characterized by the conditions $\nabla G=0$ and $T=0$, where
$T$ is the torsion tensor of $\nabla$. Let
$\nabla_{{\dot{\partial}}^i}{\dot{\partial}}^j=\Gamma^{ijh}\delta_h+\Gamma^{ij}_h{\dot{\partial}}^h$. Then we get
\begin{eqnarray}
\Gamma^{ijs}\!\!\!\!&=&\!\!\!\!\frac{1}{2}[-\delta_k(\beta
g^{ij}+\frac{c\beta^3}{1-2c\beta^2\tau}p^ip^j)-B^i_{km}(\beta
g^{mj}+\frac{c\beta^3}{1-2c\beta^2\tau}p^mp^j)\nonumber\\
\!\!\!\!&&\!\!\!\!-B^j_{km}(\beta
g^{mi}+\frac{c\beta^3}{1-2c\beta^2\tau}p^mp^i)](\beta
g^{ks}+\frac{c\beta^3}{1-2c\beta^2\tau}p^kp^s)=\beta^2g^{ks}L^{ij}_k.
\end{eqnarray}
Similarly we obtain
\begin{eqnarray}
\Gamma^{ij}_s\!\!\!\!&=&\!\!\!\!\frac{1}{2}[{\dot{\partial}}^i(\beta
g^{jk}+\frac{c\beta^3}{1-2c\beta^2\tau}p^jp^k)+{\dot{\partial}}^j(\beta
g^{ik}+\frac{c\beta^3}{1-2c\beta^2\tau}p^ip^k)\nonumber\\
\!\!\!\!&&\!\!\!\!-{\dot{\partial}}^k(\beta
g^{ij}+\frac{c\beta^3}{1-2c\beta^2\tau}p^ip^j)](\frac{1}{\beta}g_{ks}-c\beta
p_kp_s)=-C^{ij}_s+c\beta G^{ij}p_s.
\end{eqnarray}
Using two above equation, we have (\ref{con3}). By a similar way,
we obtain (\ref{con2}), (\ref{con4}) and (\ref{con9}).
\end{proof}
We say that the vertical distribution $VT^{\ast}M_0$ is totally
geodesic (resp. minimal) in $TT^{\ast}M_0$ if
$H\nabla_{\dot{\partial}^i}\dot{\partial}^j=0$ (resp.
$g_{ij}H\nabla_{\dot{\partial}^i}\dot{\partial}^j=0$), where $H$
denotes the horizontal projection. Similarly, if we denote by $V$
the vertical projection, then we say that the horizontal
distribution $HT^{\ast}M_0$ is totally geodesic (resp. minimal) in
$TT^{\ast}M_0$ if $V\nabla_{\delta_i}\delta_j=0$ (resp.
$g^{ij}V\nabla_{\delta_i}\delta_j=0$). By using (\ref{con3}), we
obtain
\begin{equation}
H\nabla_{{\dot{\partial}}^i}{\dot{\partial}}^j=\beta^2L^{ijs}\delta_s,
\end{equation}
and
\begin{equation}
g_{ij}H\nabla_{{\dot{\partial}}^i}{\dot{\partial}}^j=\beta^2g_{ij}L^{ijs}\delta_s=\beta^2J^s\delta_s,
\end{equation}
where $J^s$ is the mean Landsberg tensor. Hence, we have the
following.
\begin{cor}
Let $(M,K)$ be a Cartan space with Berwald connection. Then we have\\
$(i)$ $K$ is Landsberg metric if and only if the vertical
distribution $VT^{\ast}M_0$ is totally geodesic in $TT^{\ast}M_0$;\\
$(ii)$ $K$ is weakly Landsberg metric if and only if the vertical
distribution $VT^{\ast}M_0$ is minimal in $TT^{\ast}M_0$.
\end{cor}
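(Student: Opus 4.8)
The plan is to read both equivalences directly off the Levi-Civita connection formula~(\ref{con3}) and the two displayed identities that immediately precede the statement. First I would recall the two definitions at issue: a Cartan metric $K$ is \emph{Landsberg} exactly when its Landsberg tensor vanishes, i.e. $L^{ijs}=0$ (equivalently $L^{ij}_k=0$), and $K$ is \emph{weakly Landsberg} exactly when its mean Landsberg tensor $J^s=g_{ij}L^{ijs}$ vanishes. I would also note at the outset two elementary facts used in both parts: $\beta>0$ (part of the standing positive-definiteness assumptions on $G$), and the local vector fields $\delta_1,\dots,\delta_n$ are pointwise linearly independent, being half of the adapted frame $(\delta_i,\dot{\partial}^i)$ of $TT^{\ast}M_0$; consequently a combination $\sum_s f^s\delta_s$ vanishes if and only if every coefficient $f^s$ vanishes.

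For part~(i): by definition $VT^{\ast}M_0$ is totally geodesic in $TT^{\ast}M_0$ precisely when $H\nabla_{\dot{\partial}^i}\dot{\partial}^j=0$ for all $i,j$. Applying the horizontal projection $H$ to~(\ref{con3}) kills the $\dot{\partial}^s$-term and leaves $H\nabla_{\dot{\partial}^i}\dot{\partial}^j=\beta^2 L^{ijs}\delta_s$, which is the first displayed equation before the corollary. Since $\beta^2>0$ and the $\delta_s$ are linearly independent, this vanishes for all $i,j$ if and only if $L^{ijs}=0$ for all $i,j,s$, i.e. if and only if $K$ is Landsberg.

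For part~(ii): $VT^{\ast}M_0$ is minimal precisely when the trace $g_{ij}H\nabla_{\dot{\partial}^i}\dot{\partial}^j$ vanishes. Contracting the formula from step~(i) with $g_{ij}$ and using $g_{ij}L^{ijs}=J^s$ gives $g_{ij}H\nabla_{\dot{\partial}^i}\dot{\partial}^j=\beta^2 J^s\delta_s$, the second displayed equation before the corollary. Again by $\beta^2>0$ and linear independence of the $\delta_s$, this vanishes if and only if $J^s=0$, i.e. if and only if $K$ is weakly Landsberg.

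There is essentially no obstacle here: the corollary is an immediate consequence of~(\ref{con3}) once the two auxiliary displayed identities are in hand. The only points deserving a word of justification — and the only places the hypotheses enter — are the strict positivity $\beta>0$ (so that $\beta^2 L^{ijs}=0\Leftrightarrow L^{ijs}=0$, and likewise for $J^s$) and the identification of the ``mean'' quantity $J^s$ with the $g_{ij}$-trace of $L^{ijs}$; neither requires real work.
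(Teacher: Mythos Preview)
Your proof is correct and follows exactly the paper's approach: the corollary is stated immediately after the two displayed identities $H\nabla_{\dot{\partial}^i}\dot{\partial}^j=\beta^2 L^{ijs}\delta_s$ and $g_{ij}H\nabla_{\dot{\partial}^i}\dot{\partial}^j=\beta^2 J^s\delta_s$, with the paper simply writing ``Hence, we have the following.'' You have merely made explicit the trivial step (positivity of $\beta^2$ and linear independence of the $\delta_s$) that the paper leaves to the reader.
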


\begin{cor}
The horizontal distribution $HT^{\ast}M_0$ can not be totally
geodesic or minimal in $TT^{\ast}M_0$.
\end{cor}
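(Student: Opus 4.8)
The plan is to read the vertical part of $\nabla_{\delta_i}\delta_j$ off (\ref{con9}), namely
\[
V\nabla_{\delta_i}\delta_j=\Big(-\tfrac1{\beta^2}C_{ijs}+c\beta\,G_{js}p_i\Big)\dot{\partial}^s ,
\]
and to show that its vanishing (total geodesy), or the vanishing of its $g$-trace (minimality), forces $(M,K)$ to be a flat Riemannian space, contrary to $K$ being a genuine (non-Riemannian) Cartan structure. For the totally geodesic case I would set the bracketed coefficient equal to $0$ for all $i,j,s$ and contract with $p^i$: since $C^{ijk}p_k=0$ gives $C_{ijs}p^i=0$ and $p_ip^i=K^2$, the equation collapses to $c\beta K^2G_{js}=0$; because $(G_{js})$ is positive definite, $\beta>0$ and $K>0$ on $T^{\ast}M_0$, this forces $c=0$. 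Feeding $c=0$ back in reduces the coefficient to $-\tfrac1{\beta^2}C_{ijs}$, whose vanishing gives $C^{ijk}=0$, i.e. $K$ is Riemannian; together with $c=0$ the base is flat Riemannian, a contradiction. (A shortcut for this half: $V\nabla_{\delta_i}\delta_j-V\nabla_{\delta_j}\delta_i=V[\delta_i,\delta_j]=R_{kij}\dot{\partial}^k$, so total geodesy already forces $R_{kij}=0$, whence $c=0$ by (\ref{3}).)

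For the minimality case I would impose $g^{ij}V\nabla_{\delta_i}\delta_j=0$. Using the explicit metric (\ref{Kahler}) one computes $g^{ij}G_{js}p_i=\big(\tfrac1\beta-c\beta K^2\big)p_s$, and, writing $C_s:=g^{ij}C_{ijs}$ for the (lowered) mean Cartan tensor, the condition becomes $-\tfrac1{\beta^2}C_s+c\beta\big(\tfrac1\beta-c\beta K^2\big)p_s=0$. Contracting with $p^s$ and using $C_sp^s=0$, $p_sp^s=K^2$ yields $c\,(1-c\beta^2K^2)\,K^2=0$. Since $K^2=2\tau$ is positively $2$-homogeneous in the momenta and hence genuinely non-constant on $T^{\ast}M_0$, the factor $1-c\beta^2K^2$ cannot vanish identically, so again $c=0$; then $C_s=0$, and by Deicke's theorem (in its Cartan-space form) $K$ is Riemannian, which with $c=0$ once more makes the base flat Riemannian — a contradiction.

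All the computations here are routine index manipulations using (\ref{4}) and the homogeneity identities $C^{ijk}p_k=0$; the two substantive points are that positive-definiteness of $(G_{ij})$ and the non-constancy of $K^2$ are precisely what force $c=0$ in the respective cases, and that deducing ``$K$ Riemannian'' from $C_s=0$ in the minimal case requires Deicke's theorem rather than a bare computation. I expect the only real obstacle to be a matter of convention: one must make explicit that ``Cartan space'' is meant here in the non-Riemannian sense (for a flat Riemannian base every coefficient in (\ref{con9}) is horizontal and $HT^{\ast}M_0$ would in fact be totally geodesic), so that the derived conclusion ``$c=0$ and $C^{ijk}=0$'' is genuinely absurd.
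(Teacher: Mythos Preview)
Your argument is correct and in fact more careful than the paper's. The paper reads off the same vertical component
\[
V\nabla_{\delta_i}\delta_j=\Big(-\tfrac{1}{\beta^2}C_{ijs}+c\beta\,G_{js}p_i\Big)\dot{\partial}^s,
\]
but then contracts with $p^j$ (rather than your $p^i$), obtaining $cp_ip_s(1-2c\beta^2\tau)=0$, which it simply declares ``can not be true''. That is the entire proof: the minimal case is not treated separately, and the possibility $c=0$ is not discussed.

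Your route differs in two substantive ways. First, contracting with $p^i$ gives the cleaner conclusion $c\beta K^2 G_{js}=0$, which forces $c=0$ directly from positive-definiteness of $(G_{js})$; you then feed this back to obtain $C_{ijs}=0$. Second, you actually carry out the minimal case, reducing to $c(1-c\beta^2K^2)K^2=0$ and invoking non-constancy of $K^2$ to force $c=0$, then Deicke's theorem to get $C^{ijk}=0$. The paper's contraction would also work for minimality after tracing with $g^{ij}$, but this is left implicit.

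Your final caveat is the real point of divergence: you correctly observe that the conclusion ``$c=0$ and $K$ Riemannian'' is only a contradiction under the convention that ``Cartan space'' excludes the flat Riemannian case (indeed, for flat Riemannian $K$ every term in $V\nabla_{\delta_i}\delta_j$ vanishes and $HT^{\ast}M_0$ \emph{is} totally geodesic). The paper's assertion ``$cp_ip_s(1-2c\beta^2\tau)=0$ can not be true'' tacitly presumes $c\neq 0$, consistent with the surrounding context (Theorem~\ref{THM1} treats only $c<0$ and $c>0$), but this assumption is never stated. Your version makes the hidden hypothesis explicit; the paper's buys brevity at the cost of leaving both the $c=0$ case and the minimality clause to the reader.
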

\begin{proof}
By (\ref{con9}), we have
\[
V\nabla_{\delta_i}\delta_j=(-\frac{1}{\beta^2}C_{ijs}+c\beta
G_{js}p_i){\dot{\partial}}^s.
\]
If $HT^*M$ is totally geodesic, then we have $c\beta
G_{js}p_ip^j=0$. Therefore, we obtain
\[
cp_ip_s(1-2c\beta^2\tau)=0,
\]
which can not be true.
\end{proof}
\subsection{The Curvature Tensors}
\begin{theorem}
The coefficients of the curvature tensor of K\"{a}hler metric $G$ as
follows
\begin{eqnarray}
K({\dot{\partial}}^i,{\dot{\partial}}^j){\dot{\partial}}^k\!\!\!\!&=&\!\!\!\!
\Big[\beta^2(L^{jkh}{|^i}-L^{ikh}{|^j})\Big]\delta_h
+\Big[C^{ik,j}_h-C^{jk,i}_h+c\beta G^{jk}\delta^i_h-c\beta G^{ik}\delta^j_h+C^{jk}_sC^{is}_h\nonumber\\
\!\!\!\!&&\!\!\!\!-C^{ik}_sC^{js}_h+\beta^2(L^j_{sh}L^{sik}-L^i_{sh}L^{sjk})\Big]{\dot{\partial}}^h,\label{Curvature6}
\end{eqnarray}
\begin{eqnarray}
\nonumber K(\delta_i,{\dot{\partial}}^j){\dot{\partial}}^k\!\!\!\!&=&\!\!\!\!\Big[c\beta
G^{kh}\delta^j_i-C^{kh,j}_i-C^{jh}_sC^{ks}_i-C^{jk}_sC^{hs}_i+\beta^2( L^{sjk}L^h_{is}+{L^{hjk}}_{|i}+L^k_{si}L^{hjs})\Big]\delta_h\\
\nonumber \!\!\!\!&&\!\!\!\!+\Big[B^{k,j}_{ih}-{C^{jk}_h}_{|i}-c\beta^2L^{jk}_ip_h-C_{ish}L^{jsk}
+C^{jk}_sL^s_{ih}+C^{sk}_iL^j_{sh}\\
\!\!\!\!&&\!\!\!\!-C^{js}_hL^k_{is}+L^{k,j}_{hi}
\Big]{\dot{\partial}}^h,\label{curvature13}
\end{eqnarray}
\begin{eqnarray}
\nonumber K(\delta_i,\delta_j){\delta_k}
\!\!\!\!&=&\!\!\!\!\Big[R^h_{kji}+\frac{1}{\beta^2}(C_{iks}C^{hs}_j-C_{jks}C^{hs}_i)+
c^2\beta^2(p_i\delta^h_j-p_j\delta^h_i)p_k+(L^s_{kj}L^h_{is}
-L^s_{ki}L^h_{js})\\ \nonumber \!\!\!\!&&\!\!\!\!+(L^h_{kj|i}-L^h_{ki|j})\Big]\delta_h+\Big[\frac{1}{\beta^2}(C_{ikh|j}-C_{jkh|i})+2R_{sij}L^s_{hk}+
\frac{1}{\beta^2}(C_{jks}L^s_{ih}-C_{iks}L^s_{jh}
\nonumber\\\!\!\!\!&&\!\!\!\!
+C_{jhs}L^s_{ki}-C_{ihs}L^s_{jk})\Big]{\dot{\partial}}^h,\label{curvature14}
\end{eqnarray}
\begin{eqnarray}
K(\delta_i,\delta_j){\dot{\partial}}^k\!\!\!\!&=&\!\!\!\!\Big[{C^{kh}_j}_{|i}-{C^{kh}_i}_{|j}+c\beta^2(p_jL^{kh}_i-p_iL^{kh}_j)+C^{ks}_jL^h_{si}-
C^{ks}_iL^h_{sj}\nonumber\\
\!\!\!\!&&\!\!\!\!+C^{sh}_jL^k_{si}-C^{sh}_iL^k_{sj}\Big]\delta_h+\Big[-R^k_{hji}+\frac{1}{\beta^2}(C^{ks}_iC_{jhs}-C^{ks}_jC_{ihs})+\nonumber\\
\!\!\!\!&&\!\!\!\!c^2\beta^2p_h(p_j\delta^k_i-p_i\delta^k_j)+L^{k}_{hi|j}-L^{k}_{hj|i}
+L^k_{sj}L^s_{hi}-L^k_{si}L^s_{hj}\Big]{\dot{\partial}}^h,\label{curvature15}
\end{eqnarray}
\begin{eqnarray}
K({\dot{\partial}}^i,{\dot{\partial}}^j)\delta_k\!\!\!\!&=&\!\!\!\!\Big[C^{jh,i}_k-C^{ih,j}_k+C^{js}_kC^{ih}_s-C^{is}_kC^{jh}_s+c\beta(
G^{ih}\delta^j_k-G^{jh}\delta^i_k)\nonumber\\
\!\!\!\!&&\!\!\!\!+\beta^2(L^{jsh}L^i_{sk}-L^{ish}L^j_{sk})\Big]\delta_h
+\Big[L^i_{kh}|^j-L^j_{kh}|^i\Big]{\dot{\partial}}^h,\label{curvature16}
\end{eqnarray}
\begin{eqnarray}
K(\delta_i,{\dot{\partial}}^j)\delta_k\!\!\!\!&=&\!\!\!\!\Big[C^{jh}_{k|i}+c\beta^2L^{jh}_ip_k
-L^{h,j}_{ki}-B^{h,j}_{ik}+C^{js}_kL^h_{si}\nonumber\\
\!\!\!\!&&\!\!\!\!-C^{jh}_sL^s_{ki}-C^{sh}_iL^j_{sk
}+C_{iks}L^{hjs}\Big]\delta_h\nonumber\\
\!\!\!\!&&\!\!\!\!+\Big[\frac{1}{\beta^2}(C^{,j}_{ikh}-C_{ish}C^{js}_k-C_{iks}C^{js}_h)+cp_hC^j_{ik}+cp_kC^j_{ih}
\nonumber\\
\!\!\!\!&&\!\!\!\!-c\beta
G_{kh}\delta^j_i+L^j_{sk}L^s_{hi}+L^j_{sh}L^s_{ki}-L^j_{hk|i}\Big]{\dot{\partial}}^h.\label{curvature17}
\end{eqnarray}
\end{theorem}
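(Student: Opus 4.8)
The plan is to compute the curvature operator
\[
K(X,Y)Z=\nabla_X\nabla_YZ-\nabla_Y\nabla_XZ-\nabla_{[X,Y]}Z
\]
directly from the Levi--Civita connection found in the previous Lemma, letting $X,Y,Z$ run over the adapted frame $(\delta_i,\dot{\partial}^i)$. Since $K(X,Y)=-K(Y,X)$, only six frame triples are independent, and they are exactly the six identities (\ref{Curvature6})--(\ref{curvature17}) to be proved. For each one I would substitute the connection formulas (\ref{con3})--(\ref{con9}) together with the commutator relations of the adapted frame, $[\delta_i,\delta_j]=R_{kij}\dot{\partial}^k$, $[\delta_i,\dot{\partial}^j]=B^j_{ik}\dot{\partial}^k$, $[\dot{\partial}^i,\dot{\partial}^j]=0$, and then expand each $\nabla_X(\text{coefficient}\cdot\text{frame vector})$ by the Leibniz rule.

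The only non-routine input is the differentiation of the structure functions. Because $G^{ij}$, $G_{ij}$, $C^{ij}_k$, $L^{ij}_k={C^{ij}_k}_{|h}p^h$, $B^i_{jk}$, $p^i$, $p_i$, $\tau$ are functions on $T^{\ast}M_0$, applying $\delta_k$ or $\dot{\partial}^k$ gives ordinary derivatives, which I would convert into Berwald $h$- and $v$-covariant derivatives by restoring the connection-coefficient terms. The building blocks are $\delta_k\tau=0$, $\dot{\partial}^k\tau=p^k$, $\dot{\partial}^kg^{ij}=-2C^{ijk}$, $g^{ij}{}_{|k}=-2L^{ij}_k$, $g^{ij}\mid^k=-2C^{ijk}$, $p_i\mid^j=\delta^j_i$, $p^i\mid^j=g^{ij}$ and $p_{i|j}=p^i{}_{|j}=0$ (so $\delta_jp^i=-p^sB^i_{sj}$); from these one gets the clean identities $G^{ij}{}_{|k}=-2\beta L^{ij}_k$ and its analogues for $G_{ij}$ and for the $v$-derivatives. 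One also needs the curvature d-tensors of $B\Gamma(N)$, namely $R^h_{kji}$, $P^{ih}_{jk}=\dot{\partial}^hB^i_{jk}$ (this is precisely what the comma-derivatives $B^{k,j}_{ih}$ in (\ref{curvature13}), (\ref{curvature17}) denote) and $S^{ikh}_j=0$, together with the Ricci (commutation) identities they produce whenever the order of an $h$- and a $v$-derivative, or of two $h$-derivatives, must be interchanged.

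With these substitutions each of the six cases reduces to a finite algebraic simplification, which I would organize by extracting the $\delta_h$-component and the $\dot{\partial}^h$-component of $K(X,Y)Z$ separately. A few reductions recur: antisymmetrization in $i,j$ kills all symmetric contributions; the $\tau$-dependent factor $\frac{c\beta^3}{1-2c\beta^2\tau}$ disappears after using $C^{ijk}p_k=0$, $L^{ij}_kp_i=0$ and $p_ip^i=K^2$ (this is why only $\beta$, $c\beta$, $c^2\beta^2$ survive in the final formulas); and since we work with the K\"ahler metric (\ref{Kahler}), the base Cartan space has constant curvature $c$, so the bracket term $R_{kij}$ may be replaced by $c(g_{jk}p_i-g_{ik}p_j)$ from (\ref{3}), which is what yields the $c^2\beta^2$-terms in (\ref{curvature14}) and (\ref{curvature15}). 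As an internal check on signs one can verify $K(X,Y)JZ=JK(X,Y)Z$ and the first Bianchi identity for the resulting tensors.

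The genuine difficulty is not conceptual but the scale and delicacy of the bookkeeping. One must (a) never replace $\delta_k$ or $\dot{\partial}^k$ of a non-tensorial quantity --- anything built from $B^i_{jk}$, e.g.\ the Landsberg terms --- by its covariant derivative without reinstating the connection corrections; (b) carefully extract the curvature contributions every time two covariant derivatives are commuted, since these are exactly the sources of the $R^h_{kji}$-, $P^{ih}_{jk}$- and $R_{sij}$-terms in (\ref{curvature14})--(\ref{curvature15}); and (c) keep the numerous sign choices coherent throughout. Carrying out the computation case by case and reducing modulo the identities listed above produces the six formulas (\ref{Curvature6})--(\ref{curvature17}).
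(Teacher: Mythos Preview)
Your proposal is correct and follows essentially the same approach as the paper: the paper also plugs the Levi--Civita formulas (\ref{con3})--(\ref{con9}) into $K(X,Y)Z=\nabla_X\nabla_YZ-\nabla_Y\nabla_XZ-\nabla_{[X,Y]}Z$, carries out the computation in full detail only for $K(\dot{\partial}^i,\dot{\partial}^j)\dot{\partial}^k$ using the same identities you list ($p_hL^h_{rl}=0$, $\dot{\partial}^kg^{ij}=-2C^{kij}$, and the cancellation of the $\frac{c\beta^3}{1-2c\beta^2\tau}$ contributions), and then dismisses the remaining five cases with ``Similarly we can obtain the other components.'' Your write-up is in fact more explicit than the paper's about the bookkeeping hazards and about where the constant-curvature substitution $R_{kij}=c(g_{jk}p_i-g_{ik}p_j)$ enters.
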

\begin{proof}
Recall that the curvature $K$ of $\nabla$ is obtained from the following
\begin{equation}
K(X,Y)Z={\nabla }_{X} {\nabla }_{Y} Z-{\nabla }_{Y}{\nabla }_{X}
Z-{\nabla }_{[X,Y]} Z,\ \ \  \forall
X,Y,Z\in\Gamma(TM).\label{Curvature1}
\end{equation}
Using (\ref{Curvature1}) we have
\begin{equation}
K({\dot{\partial}}^i,{\dot{\partial}}^j){\dot{\partial}}^k={\nabla
}_{{\dot{\partial}}^i} {\nabla
}_{{\dot{\partial}}^j}{\dot{\partial}}^k-{\nabla
}_{{\dot{\partial}}^j}{\nabla
}_{{\dot{\partial}}^i}{\dot{\partial}}^k .\label{Curvature2}
\end{equation}
By (\ref{con3}), it follows that
\begin{eqnarray}
\nabla_{{\dot{\partial}}^i}\nabla_{{\dot{\partial}}^j}{\dot{\partial}}^k\!\!\!\!&=&\!\!\!\!
{\dot{\partial}}^i(\beta^2g^{ms}L^{jk}_m)\delta_s
+(\beta^2g^{ms}L^{jk}_m)\nabla_{{\dot{\partial}}^i}\delta_s\nonumber\\
\!\!\!\!&+&\!\!\!\!{\dot{\partial}}^i(-C^{jk}_s+c\beta
G^{jk}p_s){\dot{\partial}}^s+(-C^{jk}_s+c\beta
G^{jk}p_s)\nabla_{{\dot{\partial}}^i}{\dot{\partial}}^s.
\label{Curvature3}
\end{eqnarray}
Since $p_hL^h_{rl}=0$, $p^lL^h_{rl}=p^rL^h_{rl}=0$ and
${\dot{\partial}}^kg^{ij}=-2C^{kij}$,  then by (\ref{Curvature3}) the following relation yields
\begin{eqnarray}
\nabla_{{\dot{\partial}}^i}\nabla_{{\dot{\partial}}^j}{\dot{\partial}}^k\!\!\!\!&=&\!\!\!\!
[\beta^2L^{hjk,i}+\beta^2(C^{mih}L^{jk}_m-C^{jk}_sL^{ihs})]\delta_h\nonumber\\
\!\!\!\!&&\!\!\!\!
+[-\beta^2g^{ms}g_{nh}L^{jk}_mL^{in}_s+C^{jk}_sC^{is}_h+c^2\beta^2G^{jk}G^{is}p_sp_h\nonumber\\
\!\!\!\!&&\!\!\!\!
-c\beta^2C^{ijk}p_h+{\dot{\partial}}^i(-C^{jk}_h+c\beta
G^{jk}p_h)]{\dot{\partial}}^h,\label{Curvature4}
\end{eqnarray}
where $L^{hjk,i}={\dot{\partial}}^iL^{hjk}$. Since
${\dot{\partial}}^i\tau=p^i$ and ${\dot{\partial}}^ip^j=g^{ij}$,
then we obtain
\begin{eqnarray}
c\beta G^{jk,i}p_s-\!\!\!\!&&\!\!\!\!c\beta G^{ik,j}p_s+c^2\beta^2G^{jk}G^{ih}p_hp_s-c^2\beta^2G^{ik}G^{jh}p_hp_s=\nonumber\\
\!\!\!\!&&\!\!\!\!\frac{c^2\beta^4}{1-2c\beta^2\tau}(g^{ik}p^j-g^{jk}p^i+g^{jk}p^i-g^{ik}p^j)=0.\hspace{2.8cm}\label{Curvature5}
\end{eqnarray}
With replace $i,j$ in (\ref{Curvature4}) and setting this equations
in (\ref{Curvature2}), also by attention (\ref{Curvature5}), we get
\begin{eqnarray}
K({\dot{\partial}}^i,{\dot{\partial}}^j){\dot{\partial}}^k\!\!\!\!&=&\!\!\!\!
[\beta^2(L^{jkh}{|^i}-L^{ikh}{|^j})]\delta_h
+[C^{ik,j}_h-C^{jk,i}_h+c\beta G^{jk}\delta^i_h-c\beta G^{ik}\delta^j_h+C^{jk}_sC^{is}_h\nonumber\\
\!\!\!\!&&\!\!\!\!-C^{ik}_sC^{js}_h+\beta^2(L^j_{sh}L^{sik}-L^i_{sh}L^{sjk})]{\dot{\partial}}^h.\label{Curvature6}
\end{eqnarray}
Similarly we can obtain the other components of curvature tensor.
\end{proof}
\begin{theorem}
Let $(M,K)$ be a Cartan space of constant curvature $c$ and the
components of the metric $G$ are given by (\ref{Kahler}).
Then the following are hold if and only if $(M,K)$ is reduce to a Riemannian space.\\
$(i)$ for $c<0$, $(T^{\ast}M_0,G,J)$ is a K\"{a}hler Einstein manifold.\\
$(ii)$ for $c>0$, $(T_{\beta}^{\ast}M_0,G,J)$ is a K\"{a}hler
Einstein manifold, where $T_{\beta}^{\ast}M_0$ the tube around the
zero section in $TM$, defined by the condition
$2\tau<\frac{1}{c\beta^2}$.
\end{theorem}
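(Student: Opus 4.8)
The plan is to produce the Ricci tensor of $\nabla$ by tracing the six curvature components of the preceding theorem, to write out the Einstein condition $\mathrm{Ric}=\lambda G$ ($\lambda$ constant) block by block in the adapted frame $(\delta_i,\dot{\partial}^i)$, and to show that these equations are compatible only with $C^{ijk}=0$; this forces $g_{ij}=g_{ij}(x)$, so that $K^2$ is quadratic in the momenta and $(M,K)$ is Riemannian. The non-trivial implication is ``K\"{a}hler--Einstein $\Rightarrow$ Riemannian''; the converse is the routine verification that, putting $C^{ijk}=L^{ij}_k=0$ in the formulas below, the metric $G$ of (\ref{Kahler}) over a real space form of curvature $c$ satisfies $\mathrm{Ric}=\lambda G$.

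\emph{Step 1: simplifying the curvature under (\ref{3}).} Insert $R_{kij}=c(g_{jk}p_i-g_{ik}p_j)$, the value $v=-c\alpha\beta^2$ and the explicit form (\ref{Kahler}) of $G_{ij},G^{ij}$ into (\ref{Curvature6})--(\ref{curvature17}). Throughout I would use the homogeneity identities $C^{ijk}p_k=0$, $p_iL^{ij}_k=p^kL^{ij}_k=0$, together with $\dot{\partial}^i\tau=p^i$, $\dot{\partial}^ip^j=g^{ij}$, $\dot{\partial}^kg^{ij}=-2C^{kij}$, $p_i\!\mid^j=\delta^j_i$, $p^i\!\mid^j=g^{ij}$, $K^2\!\mid^j=2p^j$, and the relation $R_{kij}p^k=0$ of (\ref{4}), so that the ``Riemannian part'' of each curvature component collapses to an expression in $c,\beta,\tau$ and $g,p$, while the remaining terms are polynomial in $C^{ijk}$, $L^{ij}_k$ and their $|$- and $\dot{\partial}$-derivatives.

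\emph{Step 2: Ricci tensor and the Einstein equations.} With $\mathrm{Ric}(Y,Z)=\operatorname{tr}\bigl(X\mapsto K(X,Y)Z\bigr)$, i.e. the sum of the $\delta_i$-component of $K(\delta_i,Y)Z$ and the $\dot{\partial}^i$-component of $K(\dot{\partial}^i,Y)Z$, one obtains the three blocks $\mathrm{Ric}(\delta_i,\delta_j)$, $\mathrm{Ric}(\dot{\partial}^i,\dot{\partial}^j)$ and $\mathrm{Ric}(\delta_i,\dot{\partial}^j)$. Since $G$ is K\"{a}hler, $\mathrm{Ric}$ is $J$-invariant and these blocks are $J$-conjugate, so $\mathrm{Ric}=\lambda G$ reduces to
\[
\mathrm{Ric}(\delta_i,\delta_j)=\lambda G_{ij},\qquad
\mathrm{Ric}(\dot{\partial}^i,\dot{\partial}^j)=\lambda G^{ij},\qquad
\mathrm{Ric}(\delta_i,\dot{\partial}^j)=0,
\]
and from Step~1 each left side has the shape $(\text{scalar in }c,\beta,\tau,n)\cdot(\text{metric block})+(\text{terms in }C,L,\nabla C,\nabla L)$.

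\emph{Step 3: separating scalars and eliminating the Cartan terms.} Transvecting $\mathrm{Ric}(\dot{\partial}^i,\dot{\partial}^j)=\lambda G^{ij}$ with $p_ip_j$ annihilates every $C$- and $L$-term (by $C^{ijk}p_k=0$ and $p_iL^{ij}_k=p^kL^{ij}_k=0$) and pins down $\lambda$; the same transvection of the horizontal block with $p^ip^j$ yields the same $\lambda$, and subtracting the two equations leaves an identity carrying only the Cartan/Landsberg contributions. Contracting this residual identity once more -- with $g_{ij}$ and then with $C^{ijk}$ (and $L^{ij}_k$) -- reduces it to a pointwise relation whose leading part is a definite-sign quadratic in $C^{ijk}$ (and $L^{ij}_k$), valid on the domain where $(G_{ij})$ is positive definite, namely all of $T^{\ast}M_0$ when $c<0$ and the tube $2\tau<1/(c\beta^2)$ when $c>0$. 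Positive definiteness of $G$ then forces $C^{ijk}=0$, hence $L^{ij}_k=0$ and $g_{ij}=g_{ij}(x)$, so $(M,K)$ reduces to a Riemannian space.

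The main obstacle is Steps~1--2: carrying the six curvature tensors through the trace while correctly tracking the $\dot{\partial}$- and $|$-derivatives of $C$, $G$ and $L$, and then in Step~3 spotting the exact combination of contractions that isolates a definite-sign quadratic in $C$; once that combination is found, positivity of $G$ closes the argument at once.
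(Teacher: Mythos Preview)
Your Steps 1--2 are the same plan as the paper's, and the double transvection with $p_ip_j$ would indeed determine $\lambda$. The divergence is in Step~3, and it is a real gap rather than just a different route.

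The paper never tries to produce a definite quadratic in the full tensor $C^{ijk}$. Instead it contracts the vertical Einstein block $\mathrm{Ric}(\dot{\partial}^j,\dot{\partial}^k)=\lambda G^{jk}$ with a \emph{single} $p_k$. Because $p_kC^{ijk}=0$, $p_kL^{ij}_k=0$ and $p_{k|h}=0$, almost every Cartan/Landsberg term drops, but one survives: $p_kC^{jk,h}_{\ h}=-C^{jh}_{\ h}=-I^j$, the mean Cartan tensor. The single-$p$ contraction therefore yields
\[
I^j=(\lambda-cn\beta)\,\frac{\beta}{1-2c\beta^2\tau}\,p^j,
\]
and then one further contraction with $p_j$ (using $p_jI^j=0$) forces $\lambda=cn\beta$, hence $I^j=0$. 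The conclusion ``$(M,K)$ is Riemannian'' then follows from Deicke's theorem ($I=0\Rightarrow C=0$), not from any positivity argument.

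Your proposed endgame --- subtract the horizontal and vertical blocks after the double $p$-contraction, then contract with $g_{ij}$ and $C^{ijk}$ to exhibit a definite quadratic --- is not supported by the curvature formulas. After the double $p$-contraction both blocks collapse to the \emph{same} scalar relation for $\lambda$, so the subtraction gives $0=0$, not a residual in $C$ and $L$. And if you instead look at the full Ricci blocks with $\lambda$ inserted, the $C$-quadratic pieces coming from (\ref{Curvature6}) and (\ref{curvature13}) have the shape $C^{jk}_{\ s}I^s-C^{ik}_{\ s}C^{js}_{\ i}+\ldots$, which is not sign-definite. So the ``spot the definite quadratic'' step is the missing idea: replace it by the single-$p_k$ contraction that isolates $I^j$, and then invoke Deicke's theorem.
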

\begin{proof}
Let $(M,K)$ is  a Riemannian space. Then $C^{hi}_k$ and $P^h_{ik}$
are vanish and $H^i_{jk}$ is a function of $(x^h)$. Therefore
(\ref{curvature14}) reduces to following
\begin{equation}
K(\delta_i,\delta_j)\delta_k=[R^s_{kji}+c^2\beta^2(p_i\delta^s_j-p_j\delta^s_i)p_k]\delta_s.\label{Ricc1}
\end{equation}
From Proposition 10.2 in chapter 4 of \cite{MirHri}, we have
$R_{kji}=-p_hR^h_{kji}$. Then we have
\begin{equation}
p_hR^h_{kji}=c(g_{kj}\delta^h_i-g_{ki}\delta^h_j)p_h.\label{Ricc2}
\end{equation}
Differentiating (\ref{Ricc2}) with respect to $p_s$ and taking $p=0$,  follows that
\begin{equation}\label{alaki}
R^s_{kji}=c(g_{kj}\delta^s_i-g_{ki}\delta^s_j).
\end{equation}
By putting (\ref{alaki}) in (\ref{Ricc1}),  one can obtains
\begin{eqnarray}
K(\delta_i,\delta_j)\delta_k\!\!\!\!&=&\!\!\!\!c\beta[(\frac{1}{\beta}g_{kj}-c\beta
p_kp_j)\delta^s_i-(\frac{1}{\beta}g_{ki}-c\beta\
p_kp_i)\delta^s_j]\delta_s,\nonumber\\
\!\!\!\!&=&\!\!\!\!c\beta(G_{kj}\delta^s_i-G_{ki}\delta^s_j)\delta_s.\label{Ricc3}
\end{eqnarray}
Also from (\ref{curvature17}), we get
\begin{equation}
K({\dot{\partial}}^i,\delta_j)\delta_k=c\beta
G_{sk}\delta^i_j{\dot{\partial}}^s.\label{Ricc4}
\end{equation}
From (\ref{Ricc3}) and (\ref{Ricc4}), we conclude that
\begin{eqnarray}
Ric(\delta_j,\delta_k)\!\!\!\!&=&\!\!\!\!G^{hi}G(K(\delta_i,\delta_j)\delta_k,\delta_h)+G_{hi}G(K({\dot{\partial}}^i,\delta_j)\delta_k,{\dot{\partial}}^h),\nonumber\\
\!\!\!\!&=&\!\!\!\!c\beta(G_{kj}\delta^s_i-G_{ki}\delta^s_j)G^{hi}G_{sh}+c\beta
G_{sk}\delta^i_jG_{hi}G^{sh}\nonumber\\
\!\!\!\!&=&\!\!\!\!cn\beta G_{jk}\nonumber\\
\!\!\!\!&=&\!\!\!\!cn\beta
G(\delta_j,\delta_k).\label{Ricc5}
\end{eqnarray}
Similarly from (\ref{Curvature6}) and (\ref{curvature13}),
respectively, it follows that
\begin{eqnarray}
K({\dot{\partial}}^i,{\dot{\partial}}^j){\dot{\partial}}^k=c\beta(G^{jk}\delta^i_s-G^{ik}\delta^j_s){\dot{\partial}}^s,\label{Ricc6}
\end{eqnarray}
and
\begin{eqnarray}
K(\delta_i,{\dot{\partial}}^j){\dot{\partial}}^k=c\beta
G^{ks}\delta^j_i\delta_s.\label{Ricc7}
\end{eqnarray}
By using (\ref{Ricc6}) and (\ref{Ricc7}), we obtain
\begin{eqnarray}
Ric({\dot{\partial}}^j,{\dot{\partial}}^k)\!\!\!\!&=&\!\!\!\!G^{ih}G(K(\delta_i,{\dot{\partial}}^j){\dot{\partial}}^k,\delta_h)
+G_{ih}G(K({\dot{\partial}}^i,{\dot{\partial}}^j){\dot{\partial}}^k,{\dot{\partial}}^h)\nonumber\\
\!\!\!\!&=&\!\!\!\!c\beta G^{ks}\delta^j_iG^{ih}G_{hs}+c\beta(G^{jk}\delta^i_s-G^{ik}\delta^j_s)G_{hi}G^{hs}\nonumber\\
\!\!\!\!&=&\!\!\!\!cn\beta G^{jk}\nonumber\\
\!\!\!\!&=&\!\!\!\ cn\beta G({\dot{\partial}}^j,{\dot{\partial}}^k).\label{Ricc8}
\end{eqnarray}
From (\ref{curvature13}) and (\ref{curvature15}), we have,
respectively
\begin{equation}
K(\delta_i,\delta_j){\dot{\partial}}^k=(R^k_{sij}+c\beta
R_{hij}G^{hk}p_s){\dot{\partial}}^s,\label{Ricc9}
\end{equation}
and
\begin{equation}
K({\dot{\partial}}^i,\delta_j){\dot{\partial}}^k=-c\beta
G^{ks}\delta^i_j\delta_s.\label{Ricc10}
\end{equation}
By using (\ref{Ricc9}) and (\ref{Ricc10}), we obtain
\begin{equation}
Ric(\delta_j,{\dot{\partial}}^k)=G^{ih}G(K(\delta_i,\delta_j){\dot{\partial}}^k,\delta_h)
+G_{ih}G(K({\dot{\partial}}^i,\delta_j){\dot{\partial}}^k,{\dot{\partial}}^h)=0.\label{Ricc11}
\end{equation}
From (\ref{curvature16}), we get
\begin{equation}
K({\dot{\partial}}^i,{\dot{\partial}}^j)\delta_k=c\beta(G^{is}\delta^j_k-G^{js}\delta^i_k)\delta_s.\label{Ricc12}
\end{equation}
By attention to (\ref{Ricc4}) and (\ref{Ricc12}), one can yields
\begin{equation}
Ric({\dot{\partial}}^j,\delta_k)=G^{ih}G(K(\delta_i,{\dot{\partial}}^j)\delta_k,\delta_h)
+G_{ih}G(K({\dot{\partial}}^i,{\dot{\partial}}^j)\delta_k,{\dot{\partial}}^h)=0.\label{Ricc13}
\end{equation}
From (\ref{Ricc5}), (\ref{Ricc8}), (\ref{Ricc11}) and
(\ref{Ricc13}), it follows that $Ric(X,Y)=cn\beta G(X,Y)$, $\forall X, Y\in\chi(T^{\ast}M)$.  This means that $(T^{\ast}M,G)$ is a Einstein manifold. Conversely, let $(i), (ii)$ are hold. Then there exist constant $\lambda$ such that $Ric(X,Y)=\lambda G(X,Y)$. We consider following cases:\\\\
\textbf{Case (1)}. If $\lambda=0$ (i.e., $(T^{\ast}M,G)$ is Ricci
flat), then we have $Ric({\dot{\partial}}^j,{\dot{\partial}}^k)=0$.
By using (\ref{Curvature6}) and (\ref{curvature13}) we get
\[
p_kG_{ih}G(K({\dot{\partial}}^i,{\dot{\partial}}^j){\dot{\partial}}^k,{\dot{\partial}}^h)=p_kC^{hk,j}_h-p_kC^{jk,h}_h+(n-1)c\beta
p_kG^{jk}
\]
and
\[
G^{ih}G(K(\delta_i,{\dot{\partial}}^j){\dot{\partial}}^k,\delta_h)=c\beta
p_kG^{jk}-p_kC^{kh,j}_h+\beta^2p_k{L^{hjk}}_{|h}
\]
By using two above equation, it results that
\begin{equation}
0=p_kRic({\dot{\partial}}^j,{\dot{\partial}}^k)=cn\beta
p_kG^{jk}-p_kC^{jk,h}_h+\beta^2p_k{L^{hjk}}_{|h}.\label{Ricc14}
\end{equation}
With a simple calculation, one can obtains
\begin{equation}
cn\beta p_kG^{jk}=cn\beta p_k(\beta
g^{jk}+\frac{c\beta^3}{1-2c\beta^2\tau}p^jp^k)=\frac{cn\beta^2}{1-2c\beta^2\tau}p^j,\label{Ricc17}
\end{equation}
and
\begin{eqnarray}
\!\!\!\!&&\!\!\!\!p_kC^{jk,h}_h=-p^{,h}_kC^{jk}_h=-\delta^h_kC^{jk}_h=-C^{jh}_h=-I^j,\\ \!\!\!\!&&\!\!\!\!p_k{L^{hjk}}_{|h}=-p_{k|h}L^{hjk}=0.\label{Ricc18}
\end{eqnarray}
By using (\ref{Ricc14})-(\ref{Ricc18}), we obtain
\begin{equation}
\frac{cn\beta^2}{1-2c\beta^2\tau}p^j+I^j=0.\label{Ricc19}
\end{equation}
Since $p_jI^j=0$, then by contracting (\ref{Ricc19}) with $p_j$,
we have $\frac{2cn\beta^2\tau}{1-2c\beta^2\tau}=0$. Thus we get $\beta=0$,  which  is a contradiction.\\\\
\textbf{Case (2)}. If $\lambda\neq0$, then we have
$p_kRic({\dot{\partial}}^j,{\dot{\partial}}^k)=\lambda G^{jk}p_k$.
By using  (\ref{curvature13}), (\ref{Curvature6}), (\ref{Ricc17}) and (\ref{Ricc18}), we obtain
\begin{equation}
I^j=(\lambda-cn\beta)\frac{\beta}{1-2c\beta^2\tau}p^j.\label{Ricc21}
\end{equation}
Contracting (\ref{Ricc21}) with $p_j$ yields
\begin{equation}
(\lambda-cn\beta)\frac{2\beta\tau}{1-2c\beta^2\tau}=0,\label{Ricc22}
\end{equation}
i.e., $\lambda=cn\beta$. Thus by (\ref{Ricc21}), we conclude  that
$I^j=0$, i.e., $(M,K)$ is reduces to a Riemannian space.
\end{proof}
\begin{cor}
There is not any non-Riemannian Cartan structure such that
$(T^{\ast}M_0, G, J)$ became a Einstein manifold.
\end{cor}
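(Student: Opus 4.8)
The plan is to reduce the statement to the preceding Theorem by showing that the very existence of a Kähler–Einstein structure $(T^{\ast}M_0,G,J)$ already forces $(M,K)$ to have constant flag curvature, after which that Theorem finishes the job. So I would start from an arbitrary Cartan structure $K$ for which $(T^{\ast}M_0,G,J)$ is Kähler–Einstein. First I would exploit integrability: since here $J$ is a genuine complex (not merely almost complex) structure, Theorem~\ref{Im1} applies (for $n\ge 3$) and gives that $R_{kij}=c(g_{jk}p_i-g_{ik}p_j)$, i.e.\ \eqref{3} holds, with $c$ the constant $-v/(\alpha\beta^2)$, equivalently $v=-c\alpha\beta^2$. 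By the computation following \eqref{7} — together with the identity $p_sH^s_{hjk}=-R_{hjk}$ — condition \eqref{3} is equivalent to $K$ having constant scalar curvature $c$, which for a Cartan space is the same as constant flag curvature $c$; moreover the metric $G$ then takes exactly the form \eqref{Kahler}.

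Second, I would feed this into the preceding Theorem. We now have a Cartan space $(M,K)$ of constant curvature $c$ with $G$ given by \eqref{Kahler}, so the hypotheses of that Theorem are met: if $c\le 0$ the Kähler structure lives on all of $T^{\ast}M_0$ (part $(i)$), while if $c>0$ Theorem~\ref{THM1} tells us it is only defined on the tube $T^{\ast}_{\beta}M_0$ cut out by $2\tau<1/(c\beta^2)$, which is precisely the domain considered in part $(ii)$. In either regime the Theorem asserts that the Kähler–Einstein condition holds if and only if $(M,K)$ reduces to a Riemannian space; hence any Cartan structure producing a Kähler–Einstein $(T^{\ast}M_0,G,J)$ is Riemannian, which is the claim. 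The borderline value $c=0$ (the Ricci-flat case) is handled by Case~(1) in the proof of the Theorem, where contracting $p_k\,Ric(\dot\partial^j,\dot\partial^k)=0$ with $p_j$ forces $\beta=0$, a contradiction; so $c=0$ cannot occur at all.

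The main obstacle I anticipate is conceptual rather than computational: one must be sure the passage from ``Einstein'' to ``Kähler–Einstein'' is legitimate, since an almost-Kähler–Einstein metric need not be Kähler in general. This is not an issue in our situation because the structure under discussion is the one constructed in the preceding sections, whose integrability was characterized in Theorem~\ref{Im1} via Lemma~\ref{Im}; once we know $(T^{\ast}M_0,G,J)$ is the genuine integrable Kähler structure, the chain of implications above goes through verbatim. A secondary, minor point is the dimension hypothesis $n\ge 3$ needed to invoke Theorem~\ref{Im1}; the surface case $n=2$ should be noted separately or excluded, but it does not affect the general assertion.
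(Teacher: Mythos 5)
Your proposal is correct and follows essentially the same route as the paper: the corollary is meant as an immediate consequence of the preceding Einstein theorem, whose Cases (1) and (2) show that the Einstein condition forces $I^j=0$, i.e.\ a Riemannian structure. You merely make explicit the step the paper leaves implicit — that integrability of $J$ (via Theorem~\ref{Im1}) forces constant curvature and the form (\ref{Kahler}) of $G$, so the theorem's hypotheses are met — which is a faithful filling-in rather than a different argument (your labelling of $c=0$ as ``the Ricci-flat case'' is a slight misnomer, but harmless).
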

\section{Divergence, Gradient and Laplace  Operators}
The divergence and Laplace operator have a number of applications for study various electromagnetic, gravitational and diffusion processes.
 For instance, in general relativity theory they are uniquely defined by the Levi-Civita connection for a corresponding fixing of local frames of coordinates.
 Such constructions are naturally generalized on Finsler spaces if we work with the Cartan distinguished connection because it is also metric compatible and completely defined
 by the metric structure. Even such a linear connection contains nontrivial torsion components (uniquely determined by some prescribed metric and nonlinear connection structures),
 the torsion contribution can be encoded into some divergence terms,
 for instance, in the case of stochastic/diffusion processes. How to define in a unique self-consistent form the divergence and Laplace operators,
   in Finsler-Lagrange and Hamilton-Cartan geometries with nonmetricity, without involving the Cartan distinguished connection
   (for instance, for the Chern and/or Berwald distinguished connections) it is an unsolved task in modern mathematical physics (see \cite{V}, \cite{V0}).
\begin{proposition}\label{div}
Let $(M,K)$ be a Cartan space with Berwald connection. Then we have,
\begin{equation}
div(X^V)=0,\ \ \ \ div(X^H)=X^i\delta_i(\ln\sqrt{g})-X^iJ_i,
\end{equation}
where $X=X^i\delta_i+{\bar X}_i{\dot{\partial}}^i$ and $g:=det(g_{ij})$.
\end{proposition}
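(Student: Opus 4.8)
The plan is to evaluate $div(X)$ from its definition as a trace of the covariant derivative, expressed in the adapted frame $(\delta_i,\dot{\partial}^i)$, and then to collapse the resulting sum using the algebraic symmetries of the Cartan and Landsberg tensors together with the non-metricity of the Berwald connection. Since the divergence is additive, I would split $X=X^{H}+X^{V}$ with $X^{H}=X^{i}\delta_i$ and $X^{V}=\bar X_i\dot{\partial}^i$, and treat the two pieces separately, substituting the Levi-Civita formulas (\ref{con3})--(\ref{con9}) for $G$ and contracting over the repeated index; one expects all the $c$-dependent terms to cancel, which matches the $c$-free form of the assertion.

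For the vertical piece I would use (\ref{con2}) for $\nabla_{\delta_i}\dot{\partial}^j$ and (\ref{con3}) for $\nabla_{\dot{\partial}^i}\dot{\partial}^j$. After contraction the $\delta$-valued part comes out with coefficient $\bar X_k\bigl(C^{ki}_i-c\beta G^{ki}p_i\bigr)$ and the $\dot{\partial}$-valued part with coefficient $\bar X_k\bigl(-C^{ik}_i+c\beta G^{ik}p_i\bigr)$; since $C^{ki}_i=C^{ik}_i$ by total symmetry of the Cartan tensor and $G^{ik}p_i=\beta p^k/(1-2c\beta^2\tau)$ by (\ref{Kahler}) together with $g^{ij}p_ip_j=K^2=2\tau$, these two parts are equal and opposite and cancel, so $div(X^{V})=0$.

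For the horizontal piece I would use (\ref{con9}) for $\nabla_{\delta_i}\delta_j$ and (\ref{con4}) for $\nabla_{\dot{\partial}^i}\delta_j$. Under the contraction the terms $\frac{1}{\beta^2}C_{iks}$ and $c\beta G_{ks}p_i$ drop out by the symmetry of $G$, the total symmetry of $C^{ijk}$ and the relation $C_{ijk}p^k=0$, while the two surviving Landsberg traces $L^i_{ik}$ and $L^i_{ki}$ coincide by total symmetry of $L_{ijk}$ and cancel; what is left is $div(X^{H})=X^{i}B^{j}_{ji}$. It then remains to identify the trace of the Berwald coefficients. By Jacobi's formula $\delta_i\ln\sqrt{g}=\frac{1}{2}g^{jk}\delta_i g_{jk}$; inserting $\delta_i g_{jk}=g_{jk|i}+B^{s}_{ji}g_{sk}+B^{s}_{ki}g_{js}$ and using the Berwald non-metricity ${g^{jk}}_{|i}=-2L^{jk}_i$, i.e. $g_{jk|i}=2L_{jki}$, gives $g^{jk}\delta_i g_{jk}=2J_i+2B^{j}_{ji}$, hence $B^{j}_{ji}=\delta_i\ln\sqrt{g}-J_i$ and $div(X^{H})=X^{i}\delta_i(\ln\sqrt{g})-X^{i}J_i$, which is the claim.

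The main obstacle is bookkeeping rather than anything conceptual: each of (\ref{con3}), (\ref{con2}), (\ref{con4}), (\ref{con9}) contributes four or five terms under the contraction, and the cancellations only become visible once all the symmetries ($G_{ij}=G_{ji}$, total symmetry of $C^{ijk}$ and of $L^{ijk}$) and all the homogeneity relations ($C^{ijk}p_k=0$, $L^{ijk}p_k=0$, $g^{ij}p_ip_j=K^2$) have been exploited. The one step that is genuinely not routine is the identity $B^{j}_{ji}=\delta_i\ln\sqrt{g}-J_i$: this is where the mean Landsberg curvature $J_i$ is forced to appear, and it rests essentially on the Berwald connection not being metric, ${g^{ij}}_{|k}=-2L^{ij}_k\neq 0$.
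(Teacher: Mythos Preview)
Your approach is essentially the same as the paper's: both compute the divergence of the basis vectors $\dot{\partial}^i$ and $\delta_i$ by tracing the Levi--Civita formulas (\ref{con2})--(\ref{con9}) against the block-diagonal metric $G$, obtain $div(\dot{\partial}^i)=0$ and $div(\delta_i)=B^{s}_{is}$, and then extend by linearity. The one place where you diverge from the paper is in the identification of $B^{j}_{ji}$: the paper invokes the relation $B^{i}_{jk}=H^{i}_{jk}-L^{i}_{jk}$ between the Berwald and Cartan connections and then uses the metricity of the Cartan connection to get $H^{s}_{is}=\delta_i(\ln\sqrt{g})$, whereas you go straight through Jacobi's formula and the Berwald non-metricity $g_{jk|i}=2L_{jki}$. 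Both routes are equivalent and yield the same identity $B^{j}_{ji}=\delta_i(\ln\sqrt{g})-J_i$; yours is arguably more self-contained since it does not require naming the Cartan connection.

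One small expository remark: when you say that the terms $\tfrac{1}{\beta^2}C_{iks}$ and $c\beta G_{ks}p_i$ ``drop out by the symmetry of $G$, the total symmetry of $C^{ijk}$ and the relation $C_{ijk}p^k=0$'', that overstates what is needed. These are the \emph{vertical} components of $\nabla_{\delta_i}\delta_j$, and in the trace $G^{jl}G(\nabla_{\delta_j}X^H,\delta_l)$ they are already killed by the orthogonality $G(\dot{\partial}^s,\delta_l)=0$; no symmetry or homogeneity relation is required. The genuine cancellation is the one you identify between the two Landsberg traces $L^{j}_{jk}$ and $L^{j}_{kj}$, which does rely on the symmetry of $L$.
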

\begin{proof}
By a simple calculation, we have
\begin{eqnarray*}
div({\dot{\partial}}^i)\!\!\!\!&=&\!\!\!\!G^{jl}G(\nabla_{\delta_i}{\dot{\partial}}^j,\delta_l)
+G_{jl}G(\nabla_{{\dot{\partial}}^j}{\dot{\partial}}^i,{\dot{\partial}}^l)\\
\!\!\!\!&=&\!\!\!\! G^{jl}(C^{is}_j-c\beta G^{is}p_j)G_{sl}
+G_{jl}(-C^{ji}_s+c\beta
G^{ji}p_s)G^{sl}\\
\!\!\!\!&=&\!\!\!\!C^{is}_s-c\beta G^{is}p_s+c\beta
G^{is}p_s-C^{is}_s=0,\\
div(\delta_i)\!\!\!\!&=&\!\!\!\!G^{jl}G(\nabla_{\delta_i}\delta_j,\delta_l)
+G_{jl}G(\nabla_{\dot{\partial}}^j{\delta_i},{\dot{\partial}}^l)\\
\!\!\!\!&=&\!\!\!\!G^{jl}(L^s_{ij}+B^s_{ij})G_{sl}+G_{jl}(-L^i_{sj})G^{sl}\\
\!\!\!\!&=&\!\!\!\!L^s_{is}+B^s_{is}-L^s_{is}=B^s_{is}.
\end{eqnarray*}
Let $H^i_{jk}$ are coefficients of Cartan connection. Since $B^i_{jk}=H^i_{jk}-L^i_{jk}$, then we get
$div(X^H)=X^idiv(\delta_i)=X^iH^s_{is}-X^iJ_i$. Then it is easy to check that $H^s_{is}=\frac{1}{\sqrt{g}}\delta_i(\sqrt{g})=\delta_i(\ln\sqrt{g})$.
\end{proof}
\begin{cor}
Let $(M,K)$ be a Cartan space with Berwald connection. Then $div(X)=0$ if and only if $J_i=\delta_i(\ln\sqrt{g})$.
\end{cor}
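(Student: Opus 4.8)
The plan is to read the corollary off directly from Proposition~\ref{div}. First I would split an arbitrary vector field $X\in\chi(T^{\ast}M)$ into its horizontal and vertical parts, $X=X^{H}+X^{V}$ with $X^{H}=X^{i}\delta_{i}$ and $X^{V}={\bar X}_{i}{\dot{\partial}}^{i}$, and combine the two formulas of Proposition~\ref{div}, namely $div(X^{V})=0$ and $div(X^{H})=X^{i}\delta_{i}(\ln\sqrt{g})-X^{i}J_{i}$, using additivity of the divergence, to obtain $div(X)=X^{i}\big(\delta_{i}(\ln\sqrt{g})-J_{i}\big)$.

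For the ``if'' direction, assume $J_{i}=\delta_{i}(\ln\sqrt{g})$ for every $i$; then each coefficient $\delta_{i}(\ln\sqrt{g})-J_{i}$ vanishes, hence $div(X)=0$ for all $X$. For the ``only if'' direction, assume $div(X)=0$ for every $X\in\chi(T^{\ast}M)$; testing the identity on the frame fields, i.e.\ taking $X=\delta_{i}$ for each fixed $i$, one gets from the computation in the proof of Proposition~\ref{div} that $0=div(\delta_{i})=B^{s}_{is}=H^{s}_{is}-L^{s}_{is}=\delta_{i}(\ln\sqrt{g})-J_{i}$, which is exactly the asserted relation. Equivalently, since the coefficient functions $X^{i}$ are arbitrary, the vanishing of $X^{i}(\delta_{i}(\ln\sqrt{g})-J_{i})$ for all $X$ forces $\delta_{i}(\ln\sqrt{g})=J_{i}$ componentwise.

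I do not expect any genuine obstacle here, since all the substantive content already resides in Proposition~\ref{div}: the evaluation $div(\delta_{i})=B^{s}_{is}$ through the Levi-Civita connection of $G$, the relation $B^{i}_{jk}=H^{i}_{jk}-L^{i}_{jk}$ between the Berwald and Cartan connection coefficients, the identification $L^{s}_{is}=J_{i}$ of the mean Landsberg curvature, and the classical identity $H^{s}_{is}=\delta_{i}(\ln\sqrt{g})$. The only point that deserves care is the reading of the quantifier: ``$div(X)=0$'' is to be understood as the divergence operator vanishing identically, and the $i$-th component of the relation $J_{i}=\delta_{i}(\ln\sqrt{g})$ is singled out by choosing a suitable test field such as $X=\delta_{i}$.
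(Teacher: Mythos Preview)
Your proposal is correct and matches the paper's approach: the paper states the corollary with no proof, treating it as an immediate consequence of Proposition~\ref{div}, and you have simply spelled out that immediate consequence. Your care with the quantifier (reading ``$div(X)=0$'' as holding for all $X$, then testing on $X=\delta_i$) is exactly the right way to extract the componentwise identity $J_i=\delta_i(\ln\sqrt{g})$ from the formula $div(X)=X^{i}\big(\delta_{i}(\ln\sqrt{g})-J_{i}\big)$.
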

Let us  define $gradf$ by
\[
G(gradf,X)=Xf,\ \ \ \forall X\in\chi(T^{\ast}M).
\]
Then in the adapted frames $\{\delta_i,\dot{\partial}^i\}$, one can yields
\begin{eqnarray*}
&&G(grad,\delta_i)=\delta_if=\nabla_{\delta_i}f,\\
&&G(grad,\dot{\partial}^i)=\dot{\partial}^if=\nabla_{\dot{\partial}^i}f.
\end{eqnarray*}
Put $gradf:=\alpha^i\delta_i+\beta_i\dot{\partial}^i$. Then from the above equations we have
\[
\alpha^i=G^{ih}\nabla_{\delta_h}f,\ \ \
\beta_i=G_{ih}\nabla_{\dot{\partial}^h}f.
\]
Therefore we conclude  the following.
\begin{proposition}
Let $(M,K)$ be a Cartan space with Berwald connection. Then we have
\begin{equation}
gradf=G^{ih}(\nabla_{\delta_h}f)\delta_i+G_{ih}(\nabla_{\dot{\partial}^h}f)\dot{\partial}^i.
\end{equation}
\end{proposition}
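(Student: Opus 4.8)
The plan is to compute $gradf$ directly in the adapted frame $\{\delta_i,\dot{\partial}^i\}$, exploiting the block-diagonal form of $G$ recorded in Section 3 together with the non-degeneracy of $G$.

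First I would note that, since $(G_{ij})$ --- hence also $(G^{ij})$ --- is positive definite under the standing hypotheses $\alpha,\beta>0$, $\alpha+2\tau v>0$, the map $X\mapsto G(X,\cdot)$ is fibrewise an isomorphism onto the dual, so the relation $G(gradf,X)=Xf$ for all $X\in\chi(T^{\ast}M)$ does determine a unique vector field $gradf$. Writing $gradf=\alpha^i\delta_i+\beta_i\dot{\partial}^i$ and testing the defining relation against $X=\delta_i$ and then against $X=\dot{\partial}^i$, I would use the identities $G(\delta_i,\delta_j)=G_{ij}$, $G(\dot{\partial}^i,\dot{\partial}^j)=G^{ij}$ and $G(\delta_i,\dot{\partial}^j)=0$ --- which are simply the components of $G=G_{ij}dx^idx^j+G^{ij}\delta p_i\delta p_j$ in the adapted coframe $(dx^i,\delta p_i)$ --- to obtain the decoupled scalar systems $\alpha^j G_{ji}=\delta_i f$ and $\beta_j G^{ji}=\dot{\partial}^i f$.

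Next, contracting the first system with $G^{ih}$ and the second with $G_{ih}$ and using $G^{ih}G_{hj}=\delta^i_j$ yields $\alpha^i=G^{ih}(\delta_h f)$ and $\beta_i=G_{ih}(\dot{\partial}^h f)$. Finally, since $f$ is a scalar field, $\nabla_{\delta_h}f=\delta_h f$ and $\nabla_{\dot{\partial}^h}f=\dot{\partial}^h f$ --- covariant differentiation of a function along any vector field is just the directional derivative --- and substituting these gives the claimed formula $gradf=G^{ih}(\nabla_{\delta_h}f)\delta_i+G_{ih}(\nabla_{\dot{\partial}^h}f)\dot{\partial}^i$.

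There is essentially no obstacle here: the argument is pure linear algebra once the block-diagonal structure of $G$ is written out. The only points requiring a word of care are checking that $gradf$ is well defined (immediate from non-degeneracy of $G$) and reading off the adapted-frame components of $G$ correctly, so that the horizontal and vertical halves of the system separate and can each be inverted by the appropriate block of the inverse metric; the distinction between $G_{ij}$ and $G^{ij}$ in the two slots is exactly what produces the asymmetric-looking final expression.
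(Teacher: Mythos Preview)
Your proposal is correct and follows essentially the same approach as the paper: write $gradf=\alpha^i\delta_i+\beta_i\dot{\partial}^i$, test the defining relation $G(gradf,X)=Xf$ against $X=\delta_i$ and $X=\dot{\partial}^i$, use the block-diagonal form of $G$ to decouple, and invert each block. Your write-up is in fact more explicit than the paper's, spelling out the non-degeneracy and the inversion step, but the argument is the same.
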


\bigskip

The Laplace operator of a scalar field $f\in C^{\infty}(TM)$, is
then defined as
\begin{equation}
\Delta f=div(gradf).
\end{equation}
Then we have
\begin{theorem}\label{Laplace}
Let the Riemannian metric $G$ on $T^{\ast}M_0$ comes from a Cartan
space $(M,K)$. Then the Laplace operator has the
following form
\begin{equation}
\Delta
f=G^{ih}(\nabla_{\delta_h}f)(\frac{1}{\sqrt{g}}\delta_i(\sqrt{g})-J_i),\
\ \ \ f\in C^{\infty}(T^{\ast}M).
\end{equation}
\end{theorem}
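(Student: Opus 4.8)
The plan is to unwind the definition $\Delta f = \mathrm{div}(\mathrm{grad} f)$ and substitute into it the two structural facts already available: the explicit formula for $\mathrm{grad} f$ from the preceding proposition and Proposition \ref{div} for the divergence of horizontal and vertical fields.

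First I would split $\mathrm{grad} f$ into its horizontal and vertical parts. By the preceding proposition,
\[
\mathrm{grad} f = \alpha^i\delta_i + \beta_i\dot{\partial}^i,\qquad \alpha^i = G^{ih}\nabla_{\delta_h}f,\quad \beta_i = G_{ih}\nabla_{\dot{\partial}^h}f,
\]
so $(\mathrm{grad} f)^H = \alpha^i\delta_i$ and $(\mathrm{grad} f)^V = \beta_i\dot{\partial}^i$. Since $\mathrm{div}$ is additive, $\Delta f = \mathrm{div}\big((\mathrm{grad} f)^H\big) + \mathrm{div}\big((\mathrm{grad} f)^V\big)$, and the first identity of Proposition \ref{div} kills the vertical summand. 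Applying the second identity of Proposition \ref{div} with the coefficient functions $X^i$ taken to be $\alpha^i$ gives
\[
\Delta f = \alpha^i\,\delta_i(\ln\sqrt{g}) - \alpha^i J_i = G^{ih}(\nabla_{\delta_h}f)\Big(\tfrac{1}{\sqrt{g}}\delta_i(\sqrt{g}) - J_i\Big),
\]
which is the assertion.

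The one point that deserves a word is the legitimacy of invoking Proposition \ref{div} with the \emph{non-constant} coefficients $\alpha^i = G^{ih}\delta_h f$. This is fine precisely because the divergence here was built by computing $\mathrm{div}(\delta_i) = B^s_{is}$ and $\mathrm{div}(\dot{\partial}^i) = 0$ and then extending $C^\infty(T^\ast M)$-linearly in $X$; in particular no second derivatives of $f$ enter, in agreement with the stated formula, and the identity $B^s_{is} = H^s_{is} - L^s_{is} = \tfrac{1}{\sqrt{g}}\delta_i(\sqrt{g}) - J_i$ recorded in the proof of Proposition \ref{div} closes the computation. I expect no genuine obstacle here; the only alternative worth mentioning is to bypass Proposition \ref{div} and expand $\mathrm{div}(\mathrm{grad} f) = G^{ih}G(\nabla_{\delta_i}\mathrm{grad} f,\delta_h) + G_{ih}G(\nabla_{\dot{\partial}^i}\mathrm{grad} f,\dot{\partial}^h)$ directly from the Levi-Civita connection of Section 4.1, checking that the terms involving $C^{ij}_k$, $c\beta G^{ij}p_k$ and the Landsberg tensor cancel in pairs — the bookkeeping that the short argument via Proposition \ref{div} spares us.
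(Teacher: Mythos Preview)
Your proposal is correct and matches the paper's intended argument exactly: the paper gives no separate proof of this theorem, presenting it as an immediate consequence of combining the definition $\Delta f=\mathrm{div}(\mathrm{grad}\,f)$ with the gradient formula and Proposition~\ref{div}. Your observation that the paper's divergence is being used $C^\infty$-linearly (so that no second derivatives of $f$ appear) is precisely the point on which the stated formula rests, and is consistent with how the paper derives $\mathrm{div}(X^H)=X^i\,\mathrm{div}(\delta_i)$ in the proof of Proposition~\ref{div}.
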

By attention to Theorem \ref{Laplace}, we conclude that if $f$ be a
horizontally constant function, then $\Delta f=0$. We have, also
\begin{cor}\label{Cor1}
The Laplace operator $\Delta$ is vanish if and only if
$J_i=\frac{1}{\sqrt{g}}\delta_i(\sqrt{g})=\delta_i(\ln\sqrt{g})$.
\end{cor}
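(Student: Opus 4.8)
The plan is to read this off directly from Theorem~\ref{Laplace}, which already records the closed form
\[
\Delta f=G^{ih}(\nabla_{\delta_h}f)\Big(\tfrac{1}{\sqrt{g}}\delta_i(\sqrt{g})-J_i\Big),\qquad f\in C^{\infty}(T^{\ast}M).
\]
The sufficiency is immediate: if $J_i=\tfrac{1}{\sqrt{g}}\delta_i(\sqrt{g})$ on $T^{\ast}M_0$, then the parenthesized factor vanishes identically, and hence $\Delta f=0$ for every scalar field $f$.

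For the necessity I would test the identity on the pulled-back coordinate functions. Working in a local chart and taking $f=x^k$, we have $\dot{\partial}^j x^k=0$ and $\partial_h x^k=\delta_h^k$, so $\nabla_{\delta_h}x^k=\delta_h x^k=\delta_h^k$, and the formula of Theorem~\ref{Laplace} collapses to
\[
\Delta x^k=G^{ik}\Big(\tfrac{1}{\sqrt{g}}\delta_i(\sqrt{g})-J_i\Big).
\]
Assuming that the Laplace operator vanishes, the right-hand side is zero for every $k$; contracting with $G_{k\ell}$ and using that $(G^{ik})$ is the inverse of $(G_{ij})$ (both positive definite, as discussed in Section~3), one gets $\tfrac{1}{\sqrt{g}}\delta_\ell(\sqrt{g})-J_\ell=0$, that is, $J_\ell=\tfrac{1}{\sqrt{g}}\delta_\ell(\sqrt{g})=\delta_\ell(\ln\sqrt{g})$.

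There is no genuine obstacle here. The only points needing a word of care are that the vanishing of $\Delta$ must be understood as $\Delta f=0$ for all scalar fields, which is what legitimizes using the test functions $x^k$, and that the non-degeneracy of $G$ is precisely what lets one strip off the factor $G^{ik}$ in the last step. Since $J_i$ and $\delta_i(\ln\sqrt{g})$ are $d$-objects carrying only horizontal indices, the resulting equality is chart-independent, so the equivalence follows on all of $T^{\ast}M_0$.
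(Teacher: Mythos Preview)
Your argument is correct and is precisely the natural way to unpack the statement: the paper gives no proof at all for this corollary, simply listing it as an immediate consequence of Theorem~\ref{Laplace}. Your test on the coordinate functions $f=x^k$ to extract $\tfrac{1}{\sqrt{g}}\delta_\ell(\sqrt{g})-J_\ell=0$ from the vanishing of $\Delta$ is exactly the step that the paper leaves implicit, and your remarks on non-degeneracy of $(G^{ij})$ and chart-independence are appropriate.
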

\begin{proposition}
Let the Riemannian metric $G$ on $TM$ is the K\"{a}hler metric with
components defined by (\ref{Kahler}), which is induced by the Cartan
structure $K$ on $M$. Then we have
\[
div({\textbf{C}}^{\ast})=0,\ \ \
div(\textbf{S})=p^i\delta_i(\ln\sqrt{g}),\ \ \ \Delta K^2=0,
\]
where $\textbf{S}=p^i\delta_i$ is the geodesic spray of $(M,K)$.
\end{proposition}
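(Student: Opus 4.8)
The plan is to read off all three identities from Proposition~\ref{div}, Theorem~\ref{Laplace} and the Berwald-connection identities collected in Section~2; the only point that is not pure bookkeeping is the homogeneity relation $p^{i}J_{i}=0$. First I would record that the Liouville vector field $\textbf{C}^{\ast}=p_{i}\dot{\partial}^{i}$ is purely vertical: in the decomposition $X=X^{i}\delta_{i}+\bar X_{i}\dot{\partial}^{i}$ of Proposition~\ref{div} it has $X^{i}=0$ and $\bar X_{i}=p_{i}$, so $\textbf{C}^{\ast}=(\textbf{C}^{\ast})^{V}$ and the first formula of Proposition~\ref{div} gives $div(\textbf{C}^{\ast})=div\bigl((\textbf{C}^{\ast})^{V}\bigr)=0$.

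Next, the geodesic spray $\textbf{S}=p^{i}\delta_{i}$ is purely horizontal with $X^{i}=p^{i}$, so the second formula of Proposition~\ref{div} gives $div(\textbf{S})=p^{i}\delta_{i}(\ln\sqrt{g})-p^{i}J_{i}$, and the statement reduces to $p^{i}J_{i}=0$. I would prove the latter from the identities of Section~2: the Landsberg components are $L^{ij}_{k}={C^{ij}_{k}}_{|h}p^{h}$, and since $C^{ijk}p_{k}=0$ (hence also $C^{ij}_{k}p^{k}=0$ and $p_{j}C^{jk}_{m}=0$) together with $p_{i|j}=p^{i}_{|j}=0$ from (\ref{4}), one gets that $p$ annihilates the Landsberg tensor, and tracing gives $p^{i}J_{i}=0$; hence $div(\textbf{S})=p^{i}\delta_{i}(\ln\sqrt{g})$. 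Equivalently, one may start from $div(\delta_{i})=B^{s}_{is}$ obtained in the proof of Proposition~\ref{div} and use $B^{s}_{ij}=H^{s}_{ij}-L^{s}_{ij}$, $p^{i}L^{s}_{is}=0$ and $H^{s}_{is}=\delta_{i}(\ln\sqrt{g})$.

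Finally, for $\Delta K^{2}$ I would apply Theorem~\ref{Laplace} with $f=K^{2}$. Since $K^{2}$ is a scalar, $\nabla_{\delta_{h}}K^{2}=\delta_{h}K^{2}$, and the Berwald-connection identity $K^{2}_{|j}=\delta_{j}K^{2}=0$ forces the right-hand side to vanish, so $\Delta K^{2}=G^{ih}(\delta_{h}K^{2})\bigl(\tfrac{1}{\sqrt{g}}\delta_{i}(\sqrt{g})-J_{i}\bigr)=0$. Alternatively, from the formula for the gradient and $K^{2}\!\mid^{j}=2p^{j}$ one has $grad\,K^{2}=G^{ih}(\nabla_{\delta_{h}}K^{2})\delta_{i}+G_{ih}(\nabla_{\dot{\partial}^{h}}K^{2})\dot{\partial}^{i}=2G_{ih}p^{h}\dot{\partial}^{i}$, which is purely vertical, so $\Delta K^{2}=div(grad\,K^{2})=0$ again by Proposition~\ref{div}. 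Putting the three computations together proves the Proposition; I expect the only mild obstacle to be the verification of $p^{i}J_{i}=0$, which is however immediate from $C^{ijk}p_{k}=0$ and $p^{i}_{|j}=0$.
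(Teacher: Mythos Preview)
Your proposal is correct and follows essentially the same route as the paper: the paper's proof simply invokes Proposition~\ref{div} and Theorem~\ref{Laplace} together with the two identities $p^{i}J_{i}=0$ and $\nabla_{\delta_{i}}K^{2}=\delta_{i}K^{2}=0$, which is exactly what you do. You supply more detail (in particular a justification of $p^{i}J_{i}=0$ and an alternative gradient-based argument for $\Delta K^{2}=0$), but there is no substantive difference in approach.
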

\begin{proof}
Since $p^iJ_i=0$ and $\nabla_{\delta_i}K^2=\delta_iK^2=0$, then by
using Proposition \ref{div} and Theorem \ref{Laplace}, the proof
will be complete.
\end{proof}
From above proposition we result that $div(\textbf{S})$ is zero if
and only if $\delta_i(\sqrt{g})=0$. Then by Corollary \ref{Cor1}, we have the following.
\begin{theorem}
Let $(M,K)$ be a Cartan space with Berwald connection. Suppose that Laplace operator is vanishes. Then  $div(\textbf{S})=0$ if and only if $K$ ia a mean Landsberg metric.
\end{theorem}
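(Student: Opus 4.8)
The plan is to obtain the statement by linking the explicit formula for $div(\textbf{S})$ proved in the Proposition just above with the characterization of the vanishing of $\Delta$ from Corollary \ref{Cor1}; essentially no new computation on $T^{\ast}M_0$ is required, only a careful chaining of these two facts together with the standard identity $p^iJ_i=0$ already invoked there.

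First I would recall that for the geodesic spray $\textbf{S}=p^i\delta_i$ the preceding Proposition gives, via Proposition \ref{div} and $p^iJ_i=0$, the identity $div(\textbf{S})=p^i\delta_i(\ln\sqrt{g})$. Reading the right-hand side as the derivative $\textbf{S}(\ln\sqrt{g})$ of the scalar $\ln\sqrt{g}$ along the spray, and using that $g_{ij}$, hence $\sqrt{g}$, is positively homogeneous of degree $0$ in $p$ so that $\delta_i(\ln\sqrt{g})$ is a genuine $d$-covector on $M$, I would record the equivalence $div(\textbf{S})=0\iff\delta_i(\sqrt{g})=0$ for all $i$. Next I would bring in the standing hypothesis $\Delta\equiv0$: by Corollary \ref{Cor1} this is equivalent to $J_i=\frac{1}{\sqrt{g}}\delta_i(\sqrt{g})=\delta_i(\ln\sqrt{g})$ for every $i$. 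Substituting this into the previous equivalence yields
\[
div(\textbf{S})=0\ \Longleftrightarrow\ \delta_i(\sqrt{g})=0\ (\forall i)\ \Longleftrightarrow\ J_i=0\ (\forall i)\ \Longleftrightarrow\ K\ \text{is a mean Landsberg metric},
\]
which is precisely the claimed biconditional; note that the implication ``$K$ mean Landsberg $\Rightarrow div(\textbf{S})=0$'' is then immediate (the Landsberg covector is killed in the formula for $div(\textbf{S})$ anyway), so the real content sits in the converse.

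The step I expect to be the main obstacle is the linking equivalence $div(\textbf{S})=0\iff\delta_i(\sqrt{g})=0$. Its ``$\Leftarrow$'' half is trivial from $div(\textbf{S})=p^i\delta_i(\ln\sqrt{g})$, but the ``$\Rightarrow$'' half asks us to pass from the single scalar relation $p^i\delta_i(\ln\sqrt{g})=0$ to the vanishing of the entire covector $\delta_i(\ln\sqrt{g})$, and it is exactly here that the hypothesis $\Delta=0$ must be used in an essential way: Corollary \ref{Cor1} lets us replace $\delta_i(\ln\sqrt{g})$ by the mean Landsberg covector $J_i$, after which one exploits the fibrewise/homogeneity structure of $J_i$ together with $\textbf{S}$-invariance of $\ln\sqrt{g}$ to conclude $J_i\equiv0$. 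Making this last transition airtight is the delicate point; once it is in place the theorem follows formally from Corollary \ref{Cor1} and the Proposition above.
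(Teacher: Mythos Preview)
Your proposal follows exactly the route the paper takes: the paper's argument is the single sentence preceding the theorem, namely ``From above proposition we result that $div(\textbf{S})$ is zero if and only if $\delta_i(\sqrt{g})=0$. Then by Corollary \ref{Cor1}, we have the following,'' which is precisely your chain $div(\textbf{S})=p^i\delta_i(\ln\sqrt{g})$ combined with $J_i=\delta_i(\ln\sqrt{g})$ under $\Delta=0$. You are in fact more careful than the paper in flagging that the passage from $p^i\delta_i(\ln\sqrt{g})=0$ to $\delta_i(\sqrt{g})=0$ is where the content lies; the paper simply asserts this equivalence without further comment.
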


\noindent
Esmaeil Peyghan\\
Faculty  of Science, Department of Mathematics\\
Arak University\\
Arak,  Iran\\
Email: epeyghan@gmail.com

\bigskip

\noindent
Akbar Tayebi\\
Faculty  of Science, Department of Mathematics\\
University of Qom\\
Qom. Iran\\
Email: akbar.tayebi@gmail.com

\bigskip

\noindent
Ali Ahmadi\\
Faculty  of Science, Department of Mathematics\\
Arak University\\
Arak, Iran\\
Email: ali.ahmadi3078@gmail.com


\begin{thebibliography}{MaHo}
\bibitem{AZ}H. Akbar-Zadeh, {\it  Initiation to Global Finslerian Geometry}, North-Holland Mathematical Library, 2006.
\bibitem {Anas} M. Anastasiei and P. L. Antonelli, {\it The Differential Geometry of Lagrangians
which Generate Sprays}, in vol. of Kluwer Acad. Publ. FTPH, no. 76, 1996, 15–34.
\bibitem {Anast} M. Anastasiei, {\it Geometry of Berwald–Cartan Spaces}, Proceedings of The Conference of Applied Differential Geometry - General Relativity and The Workshop on Global Analysis, Differential Geometry and Lie Algebras, 2002, 1-9, (Geometry Balkan Press, Bucharest).
\bibitem{BCS} D. Bao, S. S. Chern and Z. Shen, {\it An Introduction to Riemann-Finsler
Geometry}, Graduate Text in Mathematics 200, Springer-verlag, 2000.
 \bibitem{BT1}  B. Bidabad and A. Tayebi, {\it Properties of generalized Berwald connections}, Bull. Iran. Math. Society. No 1,  {\bf 35}(2009), 237-254.
 \bibitem{BT2}  B. Bidabad and A. Tayebi,  {\it A classification of some Finsler connections}, Publ. Math. Debrecen.  {\bf 71}(2007), 253-260.
\bibitem {Bric} F. Brickell, {\it A relation between Finsler and Cartan structures}, Tensor, N.S. {\bf 25}(1972), 360-364.
\bibitem {Car} \'{E}. Cartan,{\it Les \'{e}spaces metriqu\'{e}s fond\'{e}s sur la notion d\'{a}ire}, Actualit\'{e}s Sci. Ind., 72, Hermann, Paris, 1933.
\bibitem {Hri} D. Hrimiuc, {\it The generalized Legendre transformation}, Proc. of Nat. Seminar
on Lagrange Space, (1988), 167–171.
\bibitem {HriShi}   D. Hrimiuc and H. Shimada, {\it On the L-duality between Lagrange and Hamilton
manifolds}, Nonlinear World, {\bf 3}(1996), 613–641.
\bibitem {Kaw} M. Kawaguchi, {\it An Introduction to the Theory of Higher Order Space, I: The
Theory of Kawaguchi Spaces}, RAAG Memoirs, vol. 3, 1962.
\bibitem{Matsu} M. Matsumoto, {\it Foundation of Finsler Geometry and Special Finsler
Spaces}, Kaiseisha Press, Otsu, 1986.
\bibitem{MirAn} R. Miron and M. Anastasiei, {\it The Geometry of Lagrange Spaces: Theory and Applications}, Kluwer Acad. Publ. 1994.
\bibitem{M2} R. Miron and M. Anastasiei, {\it Vector bundles and Lagrange spaces with applications to relativity }, Geometry Balkan Press, romania, 1997.
\bibitem{MirHri} R. Miron, D. Hrimiuc, H. Shimada and V. S. Sabau, {\it The Geometry of Hamilton and
Lagrange Spaces}, Kluwer Acad. Publ.  2000.
\bibitem {Mir1} R. Miron, {\it Hamilton Geometry}, Univ. Timisoara(Romania), Sem. Mecanica, 3 (1987), 1-54.
\bibitem {Mir2} R. Miron, {\it Cartan Spaces in a new point of view by considering them as duals
of Finsler Spaces}, Tensor N.S. {\bf 46}(1987), 330–334.
\bibitem {Mir3} R. Miron, {\it The geometry of Cartan spaces}, Prog. Math. India. {\bf 22}(1988), 1–38.
\bibitem{Miro1} R. Miron, {\it Cartan spaces in a new point of view by considering them as duals of Finsler
spaces}, Tensor, N.S. {\bf 46}(1987), 329-334.
\bibitem{Miro2} R. Miron, {\it The Geometry of Cartan spaces}, Progress of Math., {\bf 22}(1988), 01-38.
\bibitem{Na} H. G. Nagaraja, {\it  On Cartan Spaces with $(\alpha, \beta)$-metric}, Turk. J. Math. {\bf 31}(2007), 363-369.
\bibitem{PT1} E. Peyghan and A. Tayebi, {\it Finslerian Complex and K\"{a}hlerian Structures}, Journal of Nonlinear Analysis: Real World  Appl. doi: 10.1016/j.nonrwa.2009.10.022.
\bibitem{PT2} E. Peyghan and A. Tayebi, {\it A K\"{a}hler structure on Finsler spaces with nonzero constant flag curvature}, J. Math. Phys. {\bf 51}, 022904 (2010).
\bibitem{Rund} H. Rund, {\it The Hamiltonian-Jacobi theory in the calculus of variations}, D. van Nostrand
Co., London, 1966.
\bibitem{ShDiff} Z. Shen, {\it Differential Geometry of Spray and Finsler Spaces}, Kluwer Academic Publishers,  2001.
\bibitem{ShLec} Z. Shen, {\it Lectures on Finsler Geometry}, World Scientific, 2001.
\bibitem{TAE} A. Tayebi, E. Azizpour and E. Esrafilian, {\it On a family of connections in Finsler geometry}, Publ. Math. Debrecen. {\bf 72}(2008), 1-15.
\bibitem{TN} A. Tayebi and B. Najafi, {\it Shen's processes on Finslerian connection theorey}, Bull. Iran. Math. Soc. {\bf 36},  No. 2,   (2010), 1-17.
\bibitem{V} S. Vacaru, {\it Stochastic calculus on generalized Lagrange
spaces}, Zilele Academice Iesene, 6-9 Octombrie, 1994 Program.
(Academia Romania, Iasi, Rom?ania, 1994) p.30

\bibitem{V0} S. Vacaru, {\it Locally Anisotropic Stochastic Processes in Fiber Bundles}, Proceeding of the Workshop "Global Analysis, Differential Geometry and Lie Algebras",
December 16-18, 1995, Thessaloniki, Greece,  ed. G. Tsagas (Geometry
Balkan Press, Bucharest,  1997) 123-140;  arXiv:  gr-qc/9604014.
\bibitem{V1} S. Vacaru, {\it Deformation quantization of almost K\"{a}hler models and Lagrange-Finsler spaces}, J. Math. Phys. {\bf 48}, 123509 (2007).
\bibitem{V2} S. Vacaru, {\it Deformation quantization of nonholonomic almost K\"{a}hler models
and Einstein gravity}, Phys. Lett. A.  {\bf 372}(2008), 2949-2955.

\end{thebibliography}
\end{document}